\newtheorem{definition}{Definition}
\newtheorem{lemma}{Lemma}
\newtheorem{corollary}{Corollary}
\newtheorem{proposition}{Proposition}
\newcommand{\be}{\begin{equation}}
\newcommand{\ee}{\end{equation}}
\newcommand{\vect}[1]{\boldsymbol{#1}}
\newcommand{\bbar}[1]{\mathbf{\bar{\mathnormal{#1}}}}
\begin{document}

\title{Physical and unphysical solutions
of random-phase approximation equation
}

\author{\name{H. Nakada}{\ast}}

\affil{Department of Physics, Graduate School of Science, Chiba University,\\
Yayoi-cho 1-33, Inage, Chiba 263-8522, Japan
\email{nakada@faculty.chiba-u.jp}}

\begin{abstract}%
Properties of solutions of the RPA equation is reanalyzed mathematically,
which is defined as a generalized eigenvalue problem
of the stability matrix $\mathsf{S}$
with the norm matrix $\mathsf{N}=\mathrm{diag.}(1,-1)$.
As well as physical solutions, unphysical solutions are examined in detail,
with taking the possibility of Jordan blocks of the matrix $\mathsf{N\,S}$
into consideration.
Two types of duality of eigenvectors and basis vectors of the Jordan blocks
are pointed out and explored,
which disclose many basic properties of the RPA solutions.
\end{abstract}

\subjectindex{RPA, duality, stability}

\maketitle

\section{Introduction}\label{sec:intro}

The mean-field (MF) theories provide us with good first approximation
of ground states in various quantum many-body systems.
For many-fermionic systems,
we cannot exaggerate importance of the Hartree-Fock (HF) theory,
which determines the one-body fields self-consistently
with respecting the variational principle~\cite{ref:RS80}.
The HF theory is appropriate for defining
single-particle (s.p.) orbitals of the constituent particles
in the individual system.
The density-functional theory (DFT)~\cite{ref:PY89,ref:ED11}
has mathematical and computational similarity to the HF theory,
in the points that the ground state properties are obtained
from the variational principle
with respect to the one-body fields,
although there are no physical meaning in the s.p. orbitals.
The MF theory has been extended
to the Hartree-Fock-Bogolyubov (HFB) theory~\cite{ref:RS80},
in which spontaneous breakdown of the particle-number conservation
(\textit{i.e.}, the global gauge symmetry) is taken into account
while respecting the self-consistency and the variational principle.
The HFB theory is useful when two particles correlate as a pair
and the system lies in the superconducting or the superfluid phase.

The random-phase approximation (RPA)~\cite{ref:PB52}
gives description of excited states,
excitation energies and transition strengths to be precise,
on top of the HF solution~\cite{ref:Thou60}.
It is well linked to the linear response theory,
and is obtained also as the small amplitude limit
of the time-dependent HF theory~\cite{ref:RS80}.
As in the similarity between the HF and the DFT,
the small amplitude limit of the time-dependent DFT~\cite{ref:ED11}
resembles the RPA.
Analogously to the RPA on the HF,
the quasiparticle-RPA is formulated on top of the HFB theory,
which can describe excited states of systems in superfluidity.

The RPA and their extensions are basically built
upon a solution in the corresponding MF theory.
As long as the MF state lies at a distinct minimum,
the RPA solutions can be regarded as excited states from the MF state.
However, the RPA solutions could be unphysical
if the MF state is actually at a saddle-point
in the vector space defined in its vicinity.
Whereas the RPA is extensively applied,
stability of the RPA solutions is not always clear.
In localized self-bound systems like atomic nuclei,
spontaneous symmetry breakdown (SSB) necessarily occurs
in the MF regime~\cite{ref:Thou60}.
An obvious example is the breaking of the translational symmetry,
and another well-known example is the breaking of the rotational symmetry
in deformed nuclei.
The SSB leads to a Nambu-Goldstone (NG) mode,
which comes out as a zero-energy solution in the RPA.
Structure of the vector space around the MF minimum
could be further complicated in the high-spin states~\cite{ref:SM84}
and along the path of the collective motion~\cite{ref:HNMM}.
In numerical calculations within the MF theories,
we often assume certain symmetry and ignore some degrees of freedom (d.o.f.)
to save computational resources.
Even if a minimum is obtained in a MF calculation,
it does not guarantee that it remains to be a minimum
for d.o.f. ignored in the calculation,
as in a spherical HF calculation for a quadrupolarly deformed nucleus.
Whereas stability of the RPA solutions in vicinity of the MF minimum
was investigated in Refs.~\cite{ref:Thou61,ref:TV62}
by Thouless \textit{et al.},
more general arguments with rigorous mathematical treatment are desired
for a variety of extensions of the RPA developed to date.

In this article,
I reanalyze properties of RPA solutions mathematically,
in terms of the linear algebra.
Unphysical solutions as well as physical solutions are examined
in some detail.
Although one may consider that unphysical solutions are just meaningless,
they are useful for understanding properties of the RPA equation
more profoundly.
As a result, they help us to comprehend
in what manner physical solutions and NG modes come out.
The present analysis will be of practical significance as well
in coding programs for numerical calculations in the RPA,
because one often has to prepare for various situations in numerical studies,
without knowing structure of the vector space around the MF state sufficiently.

\section{RPA equation}\label{sec:RPAeq}

The RPA equation is written as
\be\begin{split}
 \sum_\beta \big[A_{\alpha\beta}\,X^{(\nu)}_\beta
 + B_{\alpha\beta}\,Y^{(\nu)}_\beta\big] &= \omega_\nu\,X^{(\nu)}_\alpha\,,\\
 \sum_\beta \big[B_{\alpha\beta}^\ast\,X^{(\nu)}_\beta
 + A_{\alpha\beta}^\ast\,Y^{(\nu)}_\beta\big] &= -\omega_\nu\,Y^{(\nu)}_\beta\,,
\end{split}\label{eq:RPAeq-a}\ee
where $\alpha,\beta$ represent particle-hole bases on the HF solution
or two quasiparticle bases on the HFB solution.
The matrices $A$ and $B$ are obtained
from the residual interaction and the HF s.p. (or the HFB q.p.) energies,
and satisfy
\be A=A^\dagger\,,\quad B=B^T\,. \label{eq:sym-AB}\ee
In order to cope with a variety of physical situations,
I start discussions only from this structure of the RPA equation,
without any further assumptions.
The solution of Eq.~(\ref{eq:RPAeq-a}) is
comprised of $\omega_\nu$ and $(X^{(\nu)}, Y^{(\nu)})$,
which correspond to the energy and the wave function
of the $\nu$-th excited state.
It is imposed that $(X^{(\nu)}, Y^{(\nu)})$ obeys the \textit{normalization},
\be\begin{split}
\sum_\alpha \big[X^{(\nu)}_\alpha X^{(\nu')\ast}_\alpha
 - Y^{(\nu)}_\alpha Y^{(\nu')\ast}_\alpha\big] &= \delta_{\nu\nu'}\,,\\
\sum_\alpha \big[X^{(\nu)}_\alpha Y^{(\nu')}_\alpha
 - Y^{(\nu)}_\alpha X^{(\nu')}_\alpha\big] &= 0\,.
\end{split}\label{eq:norm-a}\ee
The dimension of the $A$ and $B$ matrices is denoted by $D$.
$D$ is finite in many practical calculations.
Although $D$ can be infinite in principle,
the arguments here will cover infinite $D$ as a limiting case.

Properties of the RPA solutions are better argued
in terms of the $2D\times 2D$ matrices,
\be
 \mathsf{S}:=\begin{pmatrix} A&B\\ B^\ast&A^\ast \end{pmatrix}\,,~
 \mathsf{N}:=\begin{pmatrix} 1&0\\ 0&-1 \end{pmatrix}\,,~
 \mathsf{\Sigma}_x := \begin{pmatrix} 0&1\\1&0 \end{pmatrix}\,.
\label{eq:matrices}\ee
For the matrices $\mathsf{N}$ and $\mathsf{\Sigma}_x$,
$\mathsf{N}^2=\mathsf{\Sigma}_x^2=\mathsf{1}$
and $\mathsf{\Sigma}_x\,\mathsf{N}+\mathsf{N}\,\mathsf{\Sigma}_x=\mathsf{0}$
hold.
The matrix $\mathsf{S}$ is nothing but
the curvature matrix at the HF (or HFB) solution,
and is known as the stability matrix.
With the matrices defined in Eq.~(\ref{eq:matrices}),
the RPA equation (\ref{eq:RPAeq-a}) is expressed as
\be \mathsf{S}\,\vect{x}_\nu=\omega_\nu\mathsf{N}\,\vect{x}_\nu\,;\quad
 \vect{x}_\nu=\begin{pmatrix} X^{(\nu)}\\ Y^{(\nu)}\end{pmatrix}\,,
\label{eq:RPAeq-b}\ee
and the normalization condition (\ref{eq:norm-a}) as
\be \vect{x}_\nu^\dagger\,\mathsf{N}\,\vect{x}_{\nu'}=\delta_{\nu\nu'}\,,~
 \vect{x}_\nu^T\mathsf{\Sigma}_x\,\mathsf{N}\,\vect{x}_{\nu'}=0\,.
\label{eq:norm-b}\ee
The number of the solutions satisfying (\ref{eq:norm-b})
matches the dimension $D$ of the $A,B$ matrix in ideal cases.
However, it is not obvious whether there exist $D$ solutions
that satisfy the normalization condition (\ref{eq:norm-b}).

\begin{definition}\label{def:solvable}
If there exist $D$ independent solutions of $(\omega_\nu, \vect{x}_\nu)$
that satisfy Eqs.~(\ref{eq:RPAeq-b}), (\ref{eq:norm-b}) and $\omega_\nu>0$,
the RPA equation is said \textbf{fully solvable}.
\end{definition}

\section{General properties of RPA solutions}\label{sec:general}

\subsection{Dualities and eigenvalues}\label{subsec:duality}

The RPA equation (\ref{eq:RPAeq-b}) is equivalent
to the eigenvalue problem of the matrix $\mathsf{N\,S}$.
An eigensolution is defined by a set of an eigenvalue
and an eigenvector $(\omega_\nu,\vect{x}_\nu)$.
At the same time, Eq.~(\ref{eq:RPAeq-b}) reads
the eigenvalue problem of $\mathsf{S\,N}$
for an eigensolution $(\omega_\nu,\mathsf{N}\,\vect{x}_\nu)$.
This duality is important to derive basic properties of the RPA solutions.
The relevant duality will be established later,
and is called \textit{LR-duality} in this article,
because it connects left and right basis vectors.

Another important ingredient is the symmetry of $\mathsf{S}$.
The structure of $\mathsf{S}$ given in Eq.~(\ref{eq:matrices})
with (\ref{eq:sym-AB}) is characterized by
\be \mathsf{S}=\mathsf{S}^\dagger\,,\quad
 \mathsf{\Sigma}_x\,\mathsf{S}^\ast\,\mathsf{\Sigma}_x = \mathsf{S}\,.
\label{eq:S-prop}\ee
It should be noted that, owing to the first equation of (\ref{eq:S-prop}),
an eigenvector of $\mathsf{S\,N}$ associated with an eigenvalue $\omega_\nu$
immediately gives a left eigenvector of $\mathsf{N\,S}$
corresponding to the eigenvalue $\omega_\nu^\ast$,
because $\mathsf{S\,N}\,\vect{y}_\nu = \omega_\nu\,\vect{y}_\nu$
is equivalent to $\vect{y}_\nu^\dagger\,\mathsf{N\,S}
= \omega_\nu^\ast\,\vect{y}_\nu^\dagger$.
\begin{proposition}\label{theor:sym-eigen}
If $\omega_\nu$ is an eigenvalue of $\mathsf{N\,S}$,
$-\omega_\nu$ is also an eigenvalue with equal degeneracy.
So is $\omega_\nu^\ast$.
\end{proposition}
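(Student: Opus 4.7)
The plan is to produce explicit intertwiners between $\mathsf{N\,S}$ and each of the matrices whose spectrum we want to identify with that of $\mathsf{N\,S}$, and then read off equality of degeneracies (including Jordan block sizes) from invertibility of those intertwiners. Two distinct structural inputs from~(\ref{eq:S-prop}) will be used: the Hermiticity of $\mathsf{S}$ for the claim $\omega_\nu \leftrightarrow \omega_\nu^\ast$, and the $\mathsf{\Sigma}_x$-symmetry for the claim $\omega_\nu \leftrightarrow -\omega_\nu$ (via $-\omega_\nu^\ast$).

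First I would handle $\omega_\nu^\ast$. Because $\mathsf{S}^\dagger = \mathsf{S}$ and $\mathsf{N}^\dagger = \mathsf{N}$, one has $(\mathsf{N\,S})^\dagger = \mathsf{S\,N}$. On the other hand, $\mathsf{N}^2 = \mathsf{1}$ gives the genuine similarity $\mathsf{S\,N} = \mathsf{N}^{-1}(\mathsf{N\,S})\mathsf{N}$, so $\mathsf{N\,S}$ and its Hermitian adjoint share the same Jordan canonical form. Since the Jordan blocks of $M^\dagger$ sit at the complex conjugates of the eigenvalues of $M$, this forces the Jordan data of $\mathsf{N\,S}$ to be invariant under $\omega \mapsto \omega^\ast$, which proves the third sentence of the proposition.

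Next I would handle $-\omega_\nu$ by building an antilinear involution on $\mathbb{C}^{2D}$. Using $\mathsf{\Sigma}_x^2 = \mathsf{1}$ together with $\mathsf{\Sigma}_x \mathsf{N} + \mathsf{N}\mathsf{\Sigma}_x = \mathsf{0}$, one gets $\mathsf{\Sigma}_x \mathsf{N} \mathsf{\Sigma}_x = -\mathsf{N}$, and combined with the second equation of~(\ref{eq:S-prop}) this yields
\be
\mathsf{\Sigma}_x (\mathsf{N\,S})^\ast \mathsf{\Sigma}_x
= (\mathsf{\Sigma}_x \mathsf{N} \mathsf{\Sigma}_x)(\mathsf{\Sigma}_x \mathsf{S}^\ast \mathsf{\Sigma}_x)
= -\,\mathsf{N\,S}\,.
\ee
Defining the antilinear map $P\vect{v} := \mathsf{\Sigma}_x \vect{v}^\ast$ (which satisfies $P^2 = \mathsf{1}$, hence is invertible), this relation translates into the operator identity $P(\mathsf{N\,S}) = -(\mathsf{N\,S})P$. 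Because $P$ is antilinear, a relation $(\mathsf{N\,S} - \omega)\vect{v}_j = \vect{v}_{j-1}$ in a Jordan chain at $\omega$ becomes $(\mathsf{N\,S} - (-\omega^\ast))(P\vect{v}_j) = -P\vect{v}_{j-1}$, i.e.\ (after trivial sign rescaling) a Jordan chain at $-\omega^\ast$ of the same length. Invertibility of $P$ then gives a bijection between the generalized eigenspaces at $\omega$ and at $-\omega^\ast$, proving equality of algebraic multiplicity and block structure. Composing with the involution $\omega \mapsto \omega^\ast$ from the previous paragraph delivers the asserted equal-multiplicity correspondence $\omega_\nu \leftrightarrow -\omega_\nu$.

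The main subtlety I expect is the Jordan-block bookkeeping in the antilinear step: one must not conflate the map $\omega \mapsto -\omega$ (which would hold for a linear anticommuting intertwiner) with the correct $\omega \mapsto -\omega^\ast$ that arises because $P$ is antilinear, and one must verify that antilinearity together with the involution property $P^2 = \mathsf{1}$ really does preserve Jordan-chain lengths rather than merely permuting eigenvalues. Once that is done carefully, the proposition follows without any further assumption on $\mathsf{S}$ (in particular, no positive-definiteness), which is important because the paper intends to allow Jordan blocks.
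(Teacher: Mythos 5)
Your proposal is correct, but it proves the proposition by a genuinely different mechanism than the paper. The paper's proof works entirely at the level of the secular determinant: using $\mathsf{S}=\mathsf{S}^\dagger$ and $\mathsf{\Sigma}_x\,\mathsf{S}^\ast\,\mathsf{\Sigma}_x=\mathsf{S}$ it shows that $p(\omega):=\det(\mathsf{S}-\omega\,\mathsf{N})$ satisfies $p(\omega)=[p(\omega^\ast)]^\ast=[p(-\omega^\ast)]^\ast$, so the characteristic polynomial is even with real coefficients and its roots are invariant, with multiplicity, under $\omega\mapsto-\omega$ and $\omega\mapsto\omega^\ast$. That argument is shorter but only controls \emph{algebraic} multiplicities. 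You instead build operator-level intertwiners: the similarity $\mathsf{S\,N}=\mathsf{N}^{-1}(\mathsf{N\,S})\,\mathsf{N}$ combined with $(\mathsf{N\,S})^\dagger=\mathsf{S\,N}$ for the conjugation symmetry, and the antilinear involution $P\vect{v}=\mathsf{\Sigma}_x\vect{v}^\ast$ anticommuting with $\mathsf{N\,S}$ for the sign symmetry (your identity $\mathsf{\Sigma}_x(\mathsf{N\,S})^\ast\mathsf{\Sigma}_x=-\mathsf{N\,S}$ and the resulting map of Jordan chains at $\omega$ to chains at $-\omega^\ast$, with the antilinearity correctly producing $-\omega^\ast$ rather than $-\omega$, are right). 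What this buys you is strictly more than the proposition asks: equality of the full Jordan structure, not just of algebraic multiplicities. In effect your single argument also establishes the paper's later Lemma~\ref{lem:Jordan-sym}, which the paper proves separately by essentially the same two ingredients you use (the left-eigenvector duality from $\mathsf{S}=\mathsf{S}^\dagger$ and the conjugated chain relation in Eq.~(\ref{eq:Jordan-sym})). The only cost is that you must invoke the standard facts that a matrix and its transpose share a Jordan form and that an invertible antilinear map preserves linear independence; both are routine, so there is no gap.
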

\begin{proof}
Equation~(\ref{eq:RPAeq-b}) gives the secular equation, whose l.h.s. is
\be\begin{split}
 \det(\mathsf{S}-\omega\,\mathsf{N})
 &= \det\big([\mathsf{S}-\omega\,\mathsf{N}]^T\big)
 = \big[\det(\mathsf{S}-\omega^\ast\,\mathsf{N})\big]^\ast\\
 &= \det(\mathsf{\Sigma}_x\,\mathsf{S}^\ast\,\mathsf{\Sigma}_x
 -\omega\,\mathsf{N})
 = \det(\mathsf{S}^\ast
 -\omega\,\mathsf{\Sigma}_x\,\mathsf{N}\,\mathsf{\Sigma}_x)\\
 &= \det(\mathsf{S}^\ast+\omega\,\mathsf{N})
 = \big[\det(\mathsf{S}+\omega^\ast\,\mathsf{N})\big]^\ast\,,
\end{split}\label{eq:secular-lhs}\ee
because of Eq.~(\ref{eq:S-prop}).
Thus the equations $\det(\mathsf{S}-\omega\,\mathsf{N})=0$,
$\det(\mathsf{S}-\omega^\ast\,\mathsf{N})=0$
and $\det(\mathsf{S}+\omega^\ast\,\mathsf{N})=0$ are all equivalent,
ensuring correspondence of the eigenvalues $\omega$, $\omega^\ast$
and $-\omega^\ast$.
This leads to the eigenvalue $-\omega$ as well.
\end{proof}
\noindent
It is straightforward to show the following corollary
from the second equation of (\ref{eq:S-prop}).
\begin{corollary}\label{lem:sym-sigma}
If $(\omega_\nu, \vect{x}_\nu)$ is an eigensolution of Eq.~(\ref{eq:RPAeq-b}),
$(-\omega_\nu^\ast, \mathsf{\Sigma}_x\vect{x}_\nu^\ast)$ is also a solution.
\end{corollary}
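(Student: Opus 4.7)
The plan is to start directly from the RPA equation $\mathsf{S}\,\vect{x}_\nu=\omega_\nu\,\mathsf{N}\,\vect{x}_\nu$ and manipulate it into the same form but with $\omega_\nu \to -\omega_\nu^\ast$ and $\vect{x}_\nu \to \mathsf{\Sigma}_x\vect{x}_\nu^\ast$. The two ingredients we have at our disposal are: (i) the reality relation $\mathsf{\Sigma}_x\,\mathsf{S}^\ast\,\mathsf{\Sigma}_x=\mathsf{S}$ from the second part of Eq.~(\ref{eq:S-prop}), which, using $\mathsf{\Sigma}_x^2=\mathsf{1}$, is equivalent to the intertwining identity $\mathsf{S}\,\mathsf{\Sigma}_x=\mathsf{\Sigma}_x\,\mathsf{S}^\ast$; and (ii) the anticommutation $\mathsf{\Sigma}_x\,\mathsf{N}=-\mathsf{N}\,\mathsf{\Sigma}_x$ noted just below Eq.~(\ref{eq:matrices}). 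The sign flip of the eigenvalue will come from (ii), while the conjugation of $\vect{x}_\nu$ (and hence of $\omega_\nu$) will come from (i).

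Concretely, the first step is to take the complex conjugate of Eq.~(\ref{eq:RPAeq-b}). Since $\mathsf{N}$ is real, this gives $\mathsf{S}^\ast\vect{x}_\nu^\ast=\omega_\nu^\ast\,\mathsf{N}\,\vect{x}_\nu^\ast$. Next, I multiply from the left by $\mathsf{\Sigma}_x$ and apply the intertwining identity from (i) to turn $\mathsf{\Sigma}_x\,\mathsf{S}^\ast$ into $\mathsf{S}\,\mathsf{\Sigma}_x$, obtaining $\mathsf{S}\,(\mathsf{\Sigma}_x\,\vect{x}_\nu^\ast)=\omega_\nu^\ast\,\mathsf{\Sigma}_x\,\mathsf{N}\,\vect{x}_\nu^\ast$. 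Finally, pushing $\mathsf{\Sigma}_x$ through $\mathsf{N}$ on the right-hand side via (ii) generates a minus sign and yields $\mathsf{S}\,(\mathsf{\Sigma}_x\,\vect{x}_\nu^\ast)=-\omega_\nu^\ast\,\mathsf{N}\,(\mathsf{\Sigma}_x\,\vect{x}_\nu^\ast)$, which is exactly the RPA equation for the claimed pair.

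There is essentially no obstacle here; the corollary is a bookkeeping exercise in the two algebraic relations satisfied by $\mathsf{S}$, $\mathsf{N}$ and $\mathsf{\Sigma}_x$. The only point deserving a brief remark is that $\mathsf{\Sigma}_x\vect{x}_\nu^\ast$ is manifestly nonzero whenever $\vect{x}_\nu$ is (as $\mathsf{\Sigma}_x$ is invertible), so the new pair is a bona fide eigensolution rather than a trivial one. Together with Proposition~\ref{theor:sym-eigen}, this corollary completes the picture of how eigenvalues of $\mathsf{N\,S}$ are organized into the quadruplet $\{\omega_\nu,-\omega_\nu,\omega_\nu^\ast,-\omega_\nu^\ast\}$, now with an explicit construction of an eigenvector for $-\omega_\nu^\ast$ from one for $\omega_\nu$.
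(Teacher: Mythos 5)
Your proof is correct and is exactly the argument the paper has in mind: the corollary is stated as following ``straightforward[ly]'' from the second relation of Eq.~(\ref{eq:S-prop}), and your chain --- conjugate the equation, push $\mathsf{\Sigma}_x$ through $\mathsf{S}^\ast$ via $\mathsf{\Sigma}_x\,\mathsf{S}^\ast\,\mathsf{\Sigma}_x=\mathsf{S}$, then through $\mathsf{N}$ via the anticommutation to pick up the sign --- is precisely that computation made explicit. Nothing is missing.
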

\noindent
This relation indicates another kind of duality,
which is called \textit{UL-duality} to distinguish from the LR-duality,
since it is related to interchange (with taking complex conjugate)
of the upper and the lower components of the basis vectors.
With respect to the normalization of Eq.~(\ref{eq:norm-b}),
the following relation is obtained,
\be (\mathsf{\Sigma}_x\vect{x}_\nu^\ast)^\dagger\,\mathsf{N}\,
 (\mathsf{\Sigma}_x\vect{x}_{\nu'}^\ast)
 = -\vect{x}_{\nu'}^\dagger\,\mathsf{N}\,\vect{x}_\nu\,. \ee
The second equation of (\ref{eq:norm-b}) is interpreted as a relation
between $\mathsf{\Sigma}_x\vect{x}_\nu^\ast$ and $\vect{x}_{\nu'}$,
$(\mathsf{\Sigma}_x\vect{x}_\nu^\ast)^\dagger\,\mathsf{N}\,\vect{x}_{\nu'}=0$.

For an arbitrary vector
${\displaystyle\vect{x}=\begin{pmatrix} X\\ Y\end{pmatrix}}$,
we have $\vect{x}^\dagger\,\mathsf{N}\,\vect{x}
= X^\dagger\,X - Y^\dagger\,Y$.
In this respect the `norm' is not positive-definite.
\begin{definition}\label{def:normalizable}
A vector $\vect{x}$ is said \textbf{normalizable}
when $\vect{x}^\dagger\,\mathsf{N}\,\vect{x}>0$.
\end{definition}
\noindent
Indeed, if $\vect{x}^\dagger\,\mathsf{N}\,\vect{x}=r^2\,(>0)$,
$\vect{x}/r$ is normalized in the respect of Eq.~(\ref{eq:norm-b}).
As noted above,
it is not guaranteed that an eigenvector $\vect{x}_\nu$ of $\mathsf{N\,S}$
can be normalized.
The normalizability of an eigenvector $\vect{x}_\nu$
is a key to whether the RPA equation is solvable.

In arguments with respect to the RPA~\cite{ref:Thou61},
the `orthogonality' between two vectors $\vect{x}$ and $\vect{y}$
is sometimes defined when $\vect{x}^\dagger\,\mathsf{N}\,\vect{y}=0$,
by regarding $\mathsf{N}$ as the metric.
However, in this article I use the usual definition
that $\vect{x}$ and $\vect{y}$ are orthogonal
when $\vect{x}^\dagger\,\vect{y}=0$.

The hermiticity of $\mathsf{S}$ leads to the relation
between the solutions of Eq.~(\ref{eq:RPAeq-b}),
\be \vect{x}_\nu^\dagger\,\mathsf{S}\,\vect{x}_{\nu'}
 = \omega_\nu^\ast\,\vect{x}_\nu^\dagger\,\mathsf{N}\,\vect{x}_{\nu'}
 = \omega_{\nu'}\,\vect{x}_\nu^\dagger\,\mathsf{N}\,\vect{x}_{\nu'}\,.
\label{eq:norm-relation}\ee
Equation~(\ref{eq:norm-relation}) concludes:
\begin{lemma}\label{lem:bi-orthogonality}
For eigensolutions $(\omega_\nu, \vect{x}_\nu)$
and $(\omega_{\nu'}, \vect{x}_{\nu'})$ of the RPA equation (\ref{eq:RPAeq-b}),
$\omega_\nu^\ast=\omega_{\nu'}$
or $\vect{x}_\nu^\dagger\,\mathsf{N}\,\vect{x}_{\nu'}=0$ follows.
\end{lemma}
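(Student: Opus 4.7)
The plan is to read the lemma off directly from the chain of equalities in equation~(\ref{eq:norm-relation}), which the paper has just derived as a consequence of the Hermiticity of $\mathsf{S}$. That relation evaluates the scalar $\vect{x}_\nu^\dagger\,\mathsf{S}\,\vect{x}_{\nu'}$ in two different ways: once by applying the RPA equation (\ref{eq:RPAeq-b}) to $\vect{x}_{\nu'}$ on the right, producing the factor $\omega_{\nu'}$, and once by taking the Hermitian adjoint of the RPA equation for $\vect{x}_\nu$ (using $\mathsf{S}=\mathsf{S}^\dagger$ together with $\mathsf{N}=\mathsf{N}^\dagger$) and multiplying by $\vect{x}_{\nu'}$ on the right, producing the factor $\omega_\nu^\ast$.

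From there, subtracting the two expressions for $\vect{x}_\nu^\dagger\,\mathsf{S}\,\vect{x}_{\nu'}$ yields
\[
 (\omega_\nu^\ast - \omega_{\nu'})\,\vect{x}_\nu^\dagger\,\mathsf{N}\,\vect{x}_{\nu'}=0,
\]
and the dichotomy in the lemma is simply the statement that a product of two scalars vanishes only if at least one factor vanishes. Either $\omega_\nu^\ast=\omega_{\nu'}$, or the $\mathsf{N}$-weighted inner product $\vect{x}_\nu^\dagger\,\mathsf{N}\,\vect{x}_{\nu'}$ must itself be zero. No additional machinery beyond what equation~(\ref{eq:norm-relation}) already packages is needed.

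There is no real obstacle here; once (\ref{eq:norm-relation}) is in hand the lemma is essentially a one-line consequence. What is worth flagging, however, is what the dichotomy does \emph{not} say, since this is the ``right'' generalization of the familiar orthogonality statement to the indefinite-metric setting carried by $\mathsf{N}$. For an ordinary Hermitian eigenvalue problem one would expect orthogonality whenever two eigenvalues differ; here the condition permitting possible non-orthogonality is the weaker $\omega_\nu^\ast=\omega_{\nu'}$, which allows for complex-conjugate pairs as well as for real eigenvalues paired with themselves. This is precisely the structural feature that makes normalizability in the sense of Definition~\ref{def:normalizable} a genuinely nontrivial question and that eventually forces the paper to consider Jordan-block behavior of $\mathsf{N\,S}$.
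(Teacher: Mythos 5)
Your proof is correct and follows exactly the paper's route: the paper likewise reads the dichotomy directly off the two-way evaluation of $\vect{x}_\nu^\dagger\,\mathsf{S}\,\vect{x}_{\nu'}$ in Eq.~(\ref{eq:norm-relation}), which gives $(\omega_\nu^\ast-\omega_{\nu'})\,\vect{x}_\nu^\dagger\,\mathsf{N}\,\vect{x}_{\nu'}=0$. Nothing further is needed.
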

\noindent
The case of $\nu=\nu'$ was argued in Ref.~\cite{ref:Thou61}.

It is now possible to classify solutions of Eq.~(\ref{eq:RPAeq-b}).
\begin{proposition}\label{prop:eigenvalues}
Eigenvalues (denoted by $\omega_\nu$) of $\mathsf{N\,S}$ come out
in one of the following manners:
\begin{enumerate}
\item\label{item:positive}
 $\omega_\nu>0$ with a normalizable eigenvector $\vect{x}_\nu$,
 in association with another eigensolution
 $(-\omega_\nu, \mathsf{\Sigma}_x\vect{x}_\nu^\ast)$.
\item\label{item:negative}
 $\omega_\nu>0$ with an unnormalizable eigenvector,
 in association with an eigenvalue $-\omega_\nu$ that could be normalizable.
\item\label{item:pure-imag}
 a pair of pure imaginary eigenvalues, $\pm\omega_\nu$
 with $\mathrm{Re}(\omega_\nu)=0$, $\mathrm{Im}(\omega_\nu)\ne 0$.
\item\label{item:quartet}
 a quartet of complex eigenvalues, $\pm\omega_\nu, \pm\omega_\nu^\ast$
 with $\mathrm{Re}(\omega_\nu)\ne 0$, $\mathrm{Im}(\omega_\nu)\ne 0$.
\item\label{item:null} a null eigenvalue.
\end{enumerate}
\end{proposition}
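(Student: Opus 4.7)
The plan is to perform a case analysis on a single eigenvalue $\omega_\nu$ of $\mathsf{N\,S}$, splitting according to whether $\omega_\nu$ is zero, real and nonzero, purely imaginary, or genuinely complex. The pairing and quartet structure will be read off directly from Proposition~\ref{theor:sym-eigen}, the explicit partner eigenvectors will come from Corollary~\ref{lem:sym-sigma}, and normalizability will be decided by applying Lemma~\ref{lem:bi-orthogonality} with $\nu'=\nu$.

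First I would dispose of $\omega_\nu=0$, which is case~\ref{item:null} by definition. For $\omega_\nu\neq 0$, Proposition~\ref{theor:sym-eigen} gives that $\omega_\nu$, $-\omega_\nu$, $\omega_\nu^\ast$ and $-\omega_\nu^\ast$ are all eigenvalues of $\mathsf{N\,S}$ with equal degeneracy; what remains is to count how many of these coincide and to inspect the sign of $\vect{x}_\nu^\dagger\,\mathsf{N}\,\vect{x}_\nu$. If $\omega_\nu$ is real, the four values collapse to the pair $\{\omega_\nu,-\omega_\nu\}$; after choosing the sign so that $\omega_\nu>0$, Lemma~\ref{lem:bi-orthogonality} puts no constraint on $\vect{x}_\nu^\dagger\,\mathsf{N}\,\vect{x}_\nu$, which is real by hermiticity of $\mathsf{N}$, so depending on its sign we land in case~\ref{item:positive} or case~\ref{item:negative}, with the companion eigensolution $(-\omega_\nu,\mathsf{\Sigma}_x\vect{x}_\nu^\ast)$ in the former supplied by Corollary~\ref{lem:sym-sigma}. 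If $\omega_\nu$ is purely imaginary with $\mathrm{Im}(\omega_\nu)\neq 0$, then $\omega_\nu^\ast=-\omega_\nu$, the four values again collapse to $\pm\omega_\nu$ (case~\ref{item:pure-imag}), and since $\omega_\nu^\ast\neq\omega_\nu$ Lemma~\ref{lem:bi-orthogonality} forces $\vect{x}_\nu^\dagger\,\mathsf{N}\,\vect{x}_\nu=0$, so the eigenvector is automatically unnormalizable. Finally, if both $\mathrm{Re}(\omega_\nu)$ and $\mathrm{Im}(\omega_\nu)$ are nonzero, an elementary check shows that $\pm\omega_\nu,\pm\omega_\nu^\ast$ are four pairwise distinct numbers, yielding case~\ref{item:quartet}, and the same application of Lemma~\ref{lem:bi-orthogonality} again makes the eigenvectors unnormalizable.

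There is no genuine obstacle here: essentially all of the structural content has already been packaged into Proposition~\ref{theor:sym-eigen}, Corollary~\ref{lem:sym-sigma} and Lemma~\ref{lem:bi-orthogonality}, and the proposition amounts to an exhaustive case split. The only items requiring care are the elementary verification that the quartet in case~\ref{item:quartet} really consists of four distinct numbers (failure of any pairwise distinctness would force $\omega_\nu$ to be real, purely imaginary, or zero), and the interpretation of the phrase ``could be normalizable'' in case~\ref{item:negative}: even when $\vect{x}_\nu^\dagger\,\mathsf{N}\,\vect{x}_\nu$ is nonpositive, the partner $\mathsf{\Sigma}_x\vect{x}_\nu^\ast$ corresponding to $-\omega_\nu$ may still carry positive $\mathsf{N}$-norm, since the identity derived just before Definition~\ref{def:normalizable} yields $(\mathsf{\Sigma}_x\vect{x}_\nu^\ast)^\dagger\,\mathsf{N}\,(\mathsf{\Sigma}_x\vect{x}_\nu^\ast)=-\vect{x}_\nu^\dagger\,\mathsf{N}\,\vect{x}_\nu$.
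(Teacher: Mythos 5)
Your proposal is correct and follows exactly the route the paper intends: the paper states Prop.~\ref{prop:eigenvalues} without a separate proof, treating it as an immediate consequence of Prop.~\ref{theor:sym-eigen}, Corollary~\ref{lem:sym-sigma}, and Lemma~\ref{lem:bi-orthogonality}, which is precisely the case split you carry out. Your two points of care — the pairwise distinctness of the quartet and the sign flip $(\mathsf{\Sigma}_x\vect{x}_\nu^\ast)^\dagger\,\mathsf{N}\,(\mathsf{\Sigma}_x\vect{x}_\nu^\ast)=-\vect{x}_\nu^\dagger\,\mathsf{N}\,\vect{x}_\nu$ explaining ``could be normalizable'' — are exactly the right details to make explicit.
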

\noindent
A solution belonging to Class~\ref{item:positive}
of Prop.~\ref{prop:eigenvalues} may be called \textit{physical solution},
while a solution to one of \ref{item:negative}\,--\,\ref{item:quartet}
\textit{unphysical solution}.
Class~\ref{item:null} is closely connected to the NG mode,
and a solution belonging to it will be called \textit{NG-mode solution}.
Focusing on solutions in vicinity of a MF minimum,
Thouless did not discuss
solutions of Classes~\ref{item:negative} and \ref{item:quartet}
in Refs.~\cite{ref:Thou61,ref:TV62}.

\subsection{Basis vectors in Jordan blocks}\label{subsec:Jordan}

Several eigenvalues of $\mathsf{N\,S}$ could be degenerate,
and degenerate eigenvalues may give rise to Jordan blocks.
This possibility is examined in this subsection.
Suppose that a Jordan block is generated from an eigenvector $\vect{x}_\nu$.
A basis vector of the Jordan block is denoted by $\vect{\xi}_k^{(\nu)}$,
which is obtained by
\be \mathsf{S}\,\vect{\xi}_{k+1}^{(\nu)}
 = \omega_\nu\,\mathsf{N}\,\vect{\xi}_{k+1}^{(\nu)}
 + ic_k^{(\nu)}\,\mathsf{N}\,\vect{\xi}_k^{(\nu)}\,.\quad
 (c_k^{(\nu)}\in\mathbf{C})
\label{eq:Jordan}\ee
The constant $c_k^{(\nu)}$ is subject to normalization and to relative phase
of $\vect{\xi}_k^{(\nu)}$ and $\vect{\xi}_{k+1}^{(\nu)}$.
Although it is taken to be $ic_k^{(\nu)}=1$ in the Jordan normal form,
another normalization will be adopted in Sec.~\ref{subsec:imag-sol}.
Starting from $\vect{\xi}_1^{(\nu)}=\vect{x}_\nu$,
$\vect{\xi}_{k+1}^{(\nu)}$ and $c_k^{(\nu)}\,(\ne 0)$ can be fixed
successively for $k\,(\geq 1)$.
This chain of equations ends at a certain $k$,
where no $\vect{\xi}_{k+1}^{(\nu)}\,(\ne\vect{0})$ exists.
Dimension of individual Jordan block is determined
by how long the chain continues, and is denoted by $d_\nu$.
Equation~(\ref{eq:Jordan}) defines a Jordan block of $\mathsf{S\,N}$ as well,
whose basis vectors are $\mathsf{N}\,\vect{\xi}_k^{(\nu)}$.

Most of the following arguments will cover the case of $d_\nu=1$;
\textit{i.e.}, the case that an eigenvector $\vect{x}_\nu$
does not generate a Jordan block.
\begin{lemma}\label{lem:Jordan-sym}
It there is a Jordan block for an eigenvalue $\omega_\nu$ with dimension $d_\nu$,
so are for $-\omega_\nu$ and $\omega_\nu^\ast$ with equal dimension $d_\nu$.
\end{lemma}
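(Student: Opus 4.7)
The plan is to extend, at the level of Jordan chains, the two symmetry constructions that underlie Prop.~\ref{theor:sym-eigen} and Cor.~\ref{lem:sym-sigma}: an UL-duality step built from $\mathsf{\Sigma}_x\mathsf{S}^\ast\mathsf{\Sigma}_x=\mathsf{S}$ will map a chain at $\omega_\nu$ to one at $-\omega_\nu^\ast$, and an LR-duality step built from $\mathsf{S}=\mathsf{S}^\dagger$ will map it to one at $\omega_\nu^\ast$. Composing the two will then yield a chain at $-\omega_\nu$, so that Jordan blocks of the common dimension $d_\nu$ appear at all three partners.

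For the $-\omega_\nu^\ast$ step, I would take the complex conjugate of (\ref{eq:Jordan}), premultiply by $\mathsf{\Sigma}_x$, and insert $\mathsf{\Sigma}_x^2=\mathsf{1}$ between factors. Together with $\mathsf{\Sigma}_x\mathsf{S}^\ast\mathsf{\Sigma}_x=\mathsf{S}$ and $\mathsf{\Sigma}_x\mathsf{N}\mathsf{\Sigma}_x=-\mathsf{N}$ (which follows from $\mathsf{\Sigma}_x\mathsf{N}+\mathsf{N}\mathsf{\Sigma}_x=\mathsf{0}$), this converts the recursion into
\[
\mathsf{S}\,(\mathsf{\Sigma}_x\vect{\xi}_{k+1}^{(\nu)\ast})
= -\omega_\nu^\ast\,\mathsf{N}\,(\mathsf{\Sigma}_x\vect{\xi}_{k+1}^{(\nu)\ast})
+ i\,(-c_k^{(\nu)\ast})\,\mathsf{N}\,(\mathsf{\Sigma}_x\vect{\xi}_k^{(\nu)\ast}),
\]
exhibiting $\{\mathsf{\Sigma}_x\vect{\xi}_k^{(\nu)\ast}\}_{k=1}^{d_\nu}$ as a Jordan chain at $-\omega_\nu^\ast$. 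For the $\omega_\nu^\ast$ step, I would take the Hermitian adjoint of (\ref{eq:Jordan}), obtaining a left Jordan chain of $\mathsf{SN}$ at eigenvalue $\omega_\nu^\ast$; the identity $\mathsf{SN}=\mathsf{N}(\mathsf{NS})\mathsf{N}^{-1}$ (from $\mathsf{N}^2=\mathsf{1}$) transfers this to $\mathsf{NS}$, and the standard equality between the left and right Jordan structures of a square matrix converts it into a right chain of length $d_\nu$ at $\omega_\nu^\ast$. Composing with the UL-step then yields a chain at $-(\omega_\nu^\ast)^\ast=-\omega_\nu$ of the same length.

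The main point to verify, and the likeliest source of friction, is that each map preserves chain length exactly rather than shortening it. Because complex conjugation, multiplication by $\mathsf{\Sigma}_x$, Hermitian adjoint, and conjugation by $\mathsf{N}$ are all invertible, any truncation of the image chain would pull back to a truncation of the source chain, contradicting the hypothesis that the source block has dimension $d_\nu$; the chain lengths therefore match exactly. As a coordinate-free alternative I could skip chains altogether and note that the identities $(\mathsf{NS})^\dagger=\mathsf{N}(\mathsf{NS})\mathsf{N}$ and $\mathsf{\Sigma}_x(\mathsf{NS})^\ast\mathsf{\Sigma}_x=-\mathsf{NS}$ make $\mathsf{NS}$ similar to both $(\mathsf{NS})^\dagger$ and $-(\mathsf{NS})^\ast$, so invariance of the Jordan form under similarity plus its behaviour under complex conjugation ($\lambda\mapsto\bar\lambda$) immediately forces identical Jordan structures at $\omega_\nu$, $\omega_\nu^\ast$ and $-\omega_\nu$.
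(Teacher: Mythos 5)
Your argument is correct and essentially reproduces the paper's proof: the $\mathsf{\Sigma}_x$-conjugation of the chain (\ref{eq:Jordan}) is exactly the paper's Eq.~(\ref{eq:Jordan-sym}) (where the coefficient comes out as $+ic_k^{(\nu)\ast}$ rather than your $-ic_k^{(\nu)\ast}$, because $\mathsf{\Sigma}_x\mathsf{N}=-\mathsf{N}\mathsf{\Sigma}_x$ flips the sign once more --- immaterial, since only non-vanishing of the coefficient matters), and your adjoint/left-chain step is the paper's identification of the right basis vectors of $\mathsf{S\,N}$ with the left basis vectors of $\mathsf{N\,S}$. Your explicit remark that invertibility of the maps preserves chain length, and the closing coordinate-free similarity argument, are tidy additions but do not change the route.
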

\begin{proof}
A Jordan block for the eigenvalue $\omega_\nu$ of $\mathsf{S\,N}$
directly corresponds to a Jordan block for $\omega_\nu^\ast$ of $\mathsf{N\,S}$,
by regarding the right basis vectors of $\mathsf{S\,N}$
as the left basis vectors of $\mathsf{N\,S}$.
Therefore the dimensions corresponding to $\omega_\nu$ and $\omega_\nu^\ast$
must be equal.\\
It follows from Eqs.~(\ref{eq:S-prop}) and (\ref{eq:Jordan}) that,
\be \mathsf{S}\,\mathsf{\Sigma}_x\vect{\xi}_{k+1}^{(\nu)\ast}
 = -\omega_\nu^\ast\,\mathsf{N}\,\mathsf{\Sigma}_x\vect{\xi}_{k+1}^{(\nu)\ast}
 + ic_k^{(\nu)\ast}\,\mathsf{N}\,\mathsf{\Sigma}_x\vect{\xi}_k^{(\nu)\ast}\,,
\label{eq:Jordan-sym}\ee
verifying that there exists a Jordan block for $-\omega_\nu^\ast$
with equal dimension.
Then, so is it for $-\omega_\nu$.
\end{proof}
\begin{definition}\label{def:ULdual}
The basis vectors $\vect{\xi}_k^{(\nu)}$
and $\mathsf{\Sigma}_x\vect{\xi}_k^{(\nu)\ast}$
are said to be \textbf{UL-dual} of each other.
\end{definition}
\noindent
If $\vect{\xi}_k^{(\nu)}=\mathsf{\Sigma}_x\vect{\xi}_k^{(\nu)\ast}$,
it is said to be \textit{self UL-dual}.

In general, a single eigenvalue $\omega_\nu$ may give plural Jordan blocks.
I hereafter reserve the subscript $\nu$
for representing individual Jordan block,
which is connected to a single eigenvector $\vect{x}_\nu$ of $\mathsf{N\,S}$,
rather than the eigenvalue.
The following proposition and lemmas are closely connected
to the structure of the Jordan block,
\be \begin{pmatrix}  \omega_\nu & ic_1^{(\nu)} & 0 & \cdots & 0 \\
  0 & \omega_\nu & ic_2^{(\nu)} & \ddots & 0 \\
  \vdots & \vdots & \ddots & \ddots & \vdots \\
  0 & 0 & 0 &  & ic_{d_\nu-1}^{(\nu)} \\
  0 & 0 & 0 & \cdots & \omega_\nu \end{pmatrix}\,,
\label{eq:Jordan-block}\ee
which is represented with the left and right basis vectors
that form the inverse matrix of each other apart from normalization.
Explicit proofs of Lemma~\ref{lem:Jordan-orthogonality} to
Prop.~\ref{theor:one-to-one_basis} are given
in Appendix~\ref{app:proof-Jordan},
which are instructive and useful
to confirm their compatibility with later propositions and lemmas.
\begin{lemma}\label{lem:Jordan-orthogonality}
Unless eigenvalues $\omega_\nu^\ast$ and $\omega_{\nu'}$ are equal,
any basis vector of $\mathsf{N\,S}$ associated with $\omega_\nu$
(either an eigenvector or a basis vector belonging to a Jordan block)
is orthogonal to any basis vector of $\mathsf{S\,N}$
associated with $\omega_{\nu'}$.
\end{lemma}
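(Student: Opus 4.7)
The plan is to exploit the Hermiticity of $\mathsf{S}$ to derive a two-index recurrence for the inner products
\[
 M_{l,k} := \vect{\xi}_l^{(\nu')\dagger}\,\mathsf{N}\,\vect{\xi}_k^{(\nu)},
\]
and then to induct on $l+k$, taking Lemma~\ref{lem:bi-orthogonality} as the base case. Since the basis vectors of $\mathsf{S\,N}$ associated with $\omega_{\nu'}$ are $\mathsf{N}\,\vect{\xi}_l^{(\nu')}$, the ``usual'' orthogonality $\vect{\xi}_k^{(\nu)\dagger}(\mathsf{N}\,\vect{\xi}_l^{(\nu')})=0$ that the lemma asserts is precisely $M_{l,k}^{\ast}=0$.

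First I would evaluate $\vect{\xi}_l^{(\nu')\dagger}\mathsf{S}\,\vect{\xi}_k^{(\nu)}$ in two ways. Applying Eq.~(\ref{eq:Jordan}) on the right gives $\omega_\nu M_{l,k} + ic_{k-1}^{(\nu)}M_{l,k-1}$, while pushing $\mathsf{S}$ to the left via $\mathsf{S}=\mathsf{S}^\dagger$ and using the Jordan chain for $\nu'$ gives $\omega_{\nu'}^{\ast}M_{l,k} - ic_{l-1}^{(\nu')\ast}M_{l-1,k}$. Adopting the natural conventions $c_0^{(\nu)}=c_0^{(\nu')}=0$ and $M_{0,k}=M_{l,0}=0$, so that ``missing'' chain-neighbor terms at the endpoints drop out automatically, equating the two expressions yields
\[
 (\omega_{\nu'}^{\ast}-\omega_\nu)\,M_{l,k}
 = ic_{k-1}^{(\nu)}\,M_{l,k-1} + ic_{l-1}^{(\nu')\ast}\,M_{l-1,k}
 \qquad (l,k\geq 1).
\]

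The base case $M_{1,1}=\vect{x}_{\nu'}^{\dagger}\mathsf{N}\,\vect{x}_\nu=0$ is immediate from Lemma~\ref{lem:bi-orthogonality}, since by hypothesis $\omega_\nu^{\ast}\neq\omega_{\nu'}$. Induction on $l+k$ then propagates the vanishing: if $M_{l',k'}=0$ for every $(l',k')$ with $l'+k'<l+k$, the right-hand side of the recurrence is zero, and the nonvanishing prefactor $\omega_{\nu'}^{\ast}-\omega_\nu$ forces $M_{l,k}=0$.

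The main obstacle I anticipate is the bookkeeping at the boundaries of the two Jordan chains, particularly ensuring that the ``one-sided'' recurrences at $l=1$ or $k=1$ are handled uniformly; the conventions $c_0=0$ and $M_{0,\cdot}=M_{\cdot,0}=0$ are precisely what bundle these edge cases into a single clean recurrence. Aside from that, the argument is mechanical because the hypothesis $\omega_{\nu'}^{\ast}\neq\omega_\nu$ keeps the leading coefficient invertible throughout the induction, and no appeal is made to normalizability of any $\vect{\xi}_k^{(\nu)}$, which would not be available in general within a Jordan block.
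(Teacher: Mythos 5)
Your proposal is correct and takes essentially the same route as the paper's proof in Appendix~A.1: both derive the identical two-index recurrence by evaluating $\vect{\xi}_l^{(\nu')\dagger}\,\mathsf{S}\,\vect{\xi}_k^{(\nu)}$ two ways via the Hermiticity of $\mathsf{S}$ and the Jordan chain relation, then propagate the vanishing of the overlaps from the eigenvector base case (Lemma~\ref{lem:bi-orthogonality}) using the nonvanishing factor $\omega_{\nu'}^\ast-\omega_\nu$. The only difference is bookkeeping --- the paper sweeps the index pairs row by row while you induct on $l+k$ --- which does not change the substance of the argument.
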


The above lemma suggests that
a Jordan block of $\mathsf{N\,S}$ for a specific $\omega_\nu$
and a Jordan block of $\mathsf{S\,N}$ for $\omega_\nu^\ast$ are paired.
\begin{lemma}\label{lem:Jordan-normalizability}
It is possible to take so that
each basis vector in a Jordan block of $\mathsf{N\,S}$
could overlap with no more than one basis vector
that contained in a Jordan block of $\mathsf{S\,N}$.
In order for the overlap not to vanish,
dimensions of these Jordan blocks of $\mathsf{N\,S}$ and $\mathsf{S\,N}$
must be equal.
\end{lemma}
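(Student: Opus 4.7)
The plan is to isolate a Jordan block of $\mathsf{N\,S}$ of dimension $d_\nu$ with basis $\{\vect{\xi}_k^{(\nu)}\}_{k=1}^{d_\nu}$ satisfying Eq.~(\ref{eq:Jordan}), together with a Jordan block of $\mathsf{S\,N}$ of dimension $d_\mu$ with basis $\{\vect{\eta}_k^{(\mu)}\}_{k=1}^{d_\mu}$ satisfying the analogous chain $\mathsf{S\,N}\,\vect{\eta}_{k+1}^{(\mu)} = \omega_\mu \vect{\eta}_{k+1}^{(\mu)} + i\tilde{c}_k^{(\mu)}\,\vect{\eta}_k^{(\mu)}$, and to study the overlap matrix $M_{lk} := (\vect{\eta}_l^{(\mu)})^\dagger \vect{\xi}_k^{(\nu)}$. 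By Lemma~\ref{lem:Jordan-orthogonality}, $M$ vanishes identically unless $\omega_\mu^\ast = \omega_\nu$, so I may restrict to this case.

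First I would derive an anti-diagonal recursion on $M$. Computing $(\vect{\eta}_{l+1}^{(\mu)})^\dagger\,\mathsf{N\,S}\,\vect{\xi}_{k+1}^{(\nu)}$ in two ways --- acting with $\mathsf{N\,S}$ on the right factor via Eq.~(\ref{eq:Jordan}) together with $\mathsf{N}^2=\mathsf{1}$, and acting on the left factor via the Hermitian conjugate of the $\mathsf{S\,N}$ chain (using $\mathsf{S}=\mathsf{S}^\dagger$, $\mathsf{N}=\mathsf{N}^\dagger$) --- the diagonal $\omega_\nu M_{l+1,k+1}$ and $\omega_\mu^\ast M_{l+1,k+1}$ terms cancel by hypothesis, leaving $c_k^{(\nu)} M_{l+1,k} = -\tilde{c}_l^{(\mu)\ast} M_{l,k+1}$. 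Because the structure constants are nonzero, every entry on a given anti-diagonal $l+k=n$ is a fixed multiple of one chosen entry, so $M$ is encoded by $d_\mu + d_\nu - 1$ scalars, one per anti-diagonal.

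I would then use the gauge freedom in the Jordan bases to kill all but one anti-diagonal. The Jordan chain for $\{\vect{\xi}_k^{(\nu)}\}$ is preserved under lower-triangular Toeplitz replacements $\vect{\xi}_k^{(\nu)}\mapsto\sum_{j\leq k}a_{k-j}\vect{\xi}_j^{(\nu)}$ with $a_0\neq 0$, and similarly for $\{\vect{\eta}_l^{(\mu)}\}$; together these act on $M$ as $M \to B\,M\,A^T$ with $A,B$ invertible lower-triangular Toeplitz. The two sides furnish more than enough parameters, and a straightforward induction on the width of the support of $M$ --- peeling off the outermost surviving anti-diagonal at each stage and absorbing it into a single chosen one by appropriate choice of $a_1,b_1,a_2,b_2,\dots$ --- reduces $M$ to a form supported only on a single anti-diagonal $l+k=n_0$. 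After this reduction each $\vect{\xi}_k^{(\nu)}$ overlaps with at most one $\vect{\eta}_l^{(\mu)}$, namely $\vect{\eta}_{n_0-k}^{(\mu)}$, which is the first claim.

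For the dimension-matching statement, suppose the reduced $M$ is not identically zero, so some entry on $l+k=n_0$ is nonzero; the recursion then forces every admissible $(l,k)$ on that anti-diagonal to be nonzero. For every $\vect{\xi}_k^{(\nu)}$ to have a nonvanishing overlap with some $\vect{\eta}_l^{(\mu)}$, the index $l=n_0-k$ must lie in $[1,d_\mu]$ for all $k\in[1,d_\nu]$, which requires $d_\mu\geq d_\nu$; the symmetric requirement on the $\vect{\eta}_l^{(\mu)}$ gives $d_\nu\geq d_\mu$, whence $d_\nu=d_\mu$ and $n_0=d_\nu+1$. I expect the main obstacle to be the gauge-reduction step: one must verify that the Toeplitz action is rich enough to cancel every unwanted anti-diagonal without reintroducing others, while controlling the truncation of the Toeplitz convolution on the short anti-diagonals near the corners of $M$; a careful inductive peeling should keep the bookkeeping honest.
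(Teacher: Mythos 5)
Your route is essentially the paper's own (Appendix~\ref{app:sub-A2}): evaluating the sandwiched operator two ways is exactly how Eq.~(\ref{eq:Jordan-orthogonality}) is obtained, its $\omega_\nu^\ast=\omega_{\nu'}$ specialization is your anti-diagonal recursion (the paper's Eq.~(\ref{eq:Jordan-orthogonality3})), and your Toeplitz gauge reduction is the paper's transformation $\vect{\xi}_2^{(\nu)}\to\vect{\xi}_2^{(\nu)}+\beta\,\vect{\xi}_1^{(\nu)}$ iterated up the chain. The compatibility worry you raise about the Toeplitz shift (the shifted anti-diagonal obeys the recursion with $c_{k-1}$ in place of $c_k$) disappears once you normalize $ic_k^{(\nu)}=1$, which the paper explicitly allows; alternatively use the adapted non-Toeplitz lower-triangular transformation, which is what the paper's ``transformed, accordingly'' means.

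The one substantive weakness is the dimension-matching paragraph. What you actually prove is ``if \emph{every} $\vect{\xi}_k^{(\nu)}$ and every $\vect{\eta}_l^{(\mu)}$ participates in the pairing, then $d_\mu=d_\nu$,'' whereas the lemma --- and its downstream use in Cor.~\ref{cor:non-Jordan} and Prop.~\ref{theor:one-to-one_basis} --- needs ``if \emph{any} entry of the overlap matrix is nonzero, then $d_\mu=d_\nu$.'' Your boundary reasoning does not close this: an anti-diagonal with $n_0>\max(d_\mu,d_\nu)+1$ meets neither the first row nor the first column, so nothing you state forbids a nonzero $M$ confined to the lower-right corner when $d_\mu\ne d_\nu$. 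The repair is already contained in your own recursion: its $k=0$ and $l=0$ instances (with $\vect{\xi}_0^{(\nu)}=\vect{\eta}_0^{(\mu)}=\vect{0}$ and the terminal structure constant set to zero) force the eigenvector heading each chain to overlap only with the \emph{last} vector of the other chain, so every anti-diagonal with $n\le\max(d_\mu,d_\nu)$ vanishes identically. The paper then runs the peeling from the first row downward, which pins the surviving anti-diagonal to the one through $(1,d_{\nu'})$, and observes that for $d_\nu>d_{\nu'}$ this anti-diagonal also passes through a forced zero of the first column, whence the entire block overlap vanishes. You should add this step; without it the second assertion of the lemma is not established in the form the later propositions require.
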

\noindent
The next corollary follows
from the proof of Lemma~\ref{lem:Jordan-normalizability}
given in Appendix~\ref{app:sub-A2}:
\begin{corollary}\label{cor:non-Jordan}
If two eigenvectors of $\vect{x}_\nu$ and $\vect{x}_{\nu'}$ of $\mathsf{N\,S}$
satisfies $\vect{x}_\nu^\dagger\,\mathsf{N}\,\vect{x}_{\nu'}\ne 0$,
both of them do not constitute Jordan blocks (\textit{i.e.}, $d_\nu=d_{\nu'}=1$).
\end{corollary}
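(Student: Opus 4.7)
The plan is to argue by contradiction, showing that a Jordan chain of length at least two on either eigenvector would force the $\mathsf{N}$-overlap to vanish. As a first step I would apply Lemma~\ref{lem:bi-orthogonality}: the hypothesis $\vect{x}_\nu^\dagger\,\mathsf{N}\,\vect{x}_{\nu'}\neq 0$ immediately yields $\omega_\nu^\ast=\omega_{\nu'}$, so both eigenvalue equations $\mathsf{S}\,\vect{x}_\nu=\omega_\nu\,\mathsf{N}\,\vect{x}_\nu$ and $\mathsf{S}\,\vect{x}_{\nu'}=\omega_\nu^\ast\,\mathsf{N}\,\vect{x}_{\nu'}$ hold with conjugate eigenvalues.

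Suppose, for the sake of contradiction, that $d_\nu\geq 2$. Then Eq.~(\ref{eq:Jordan}) furnishes a second Jordan-chain vector $\vect{\xi}_2^{(\nu)}$ satisfying
\be
 \mathsf{S}\,\vect{\xi}_2^{(\nu)}
 = \omega_\nu\,\mathsf{N}\,\vect{\xi}_2^{(\nu)}
 + i c_1^{(\nu)}\,\mathsf{N}\,\vect{x}_\nu\,,\qquad c_1^{(\nu)}\neq 0\,.
\ee
I would then compute the scalar $\vect{\xi}_2^{(\nu)\dagger}\,\mathsf{S}\,\vect{x}_{\nu'}$ in two ways. Letting $\mathsf{S}$ act to the right on $\vect{x}_{\nu'}$ gives $\omega_\nu^\ast\,\vect{\xi}_2^{(\nu)\dagger}\,\mathsf{N}\,\vect{x}_{\nu'}$. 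Using hermiticity $\mathsf{S}=\mathsf{S}^\dagger$ to let $\mathsf{S}$ act to the left on $\vect{\xi}_2^{(\nu)}$ and applying the Jordan relation yields $\omega_\nu^\ast\,\vect{\xi}_2^{(\nu)\dagger}\,\mathsf{N}\,\vect{x}_{\nu'}\,-\,i c_1^{(\nu)\ast}\,\vect{x}_\nu^\dagger\,\mathsf{N}\,\vect{x}_{\nu'}$. Equating the two expressions forces $c_1^{(\nu)\ast}\,\vect{x}_\nu^\dagger\,\mathsf{N}\,\vect{x}_{\nu'}=0$, contradicting both $c_1^{(\nu)}\neq 0$ and the hypothesis. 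Hence $d_\nu=1$.

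The roles of $\nu$ and $\nu'$ are symmetric, since $\vect{x}_{\nu'}^\dagger\,\mathsf{N}\,\vect{x}_\nu$ is the complex conjugate of $\vect{x}_\nu^\dagger\,\mathsf{N}\,\vect{x}_{\nu'}$ and is therefore also nonzero; repeating the argument with the roles swapped gives $d_{\nu'}=1$. The one technical point that must be handled carefully is the bookkeeping of complex conjugates when transposing the Jordan-chain relation under $\mathsf{S}^\dagger=\mathsf{S}$, but the computation itself is mechanical. As a sanity check, this result is consistent with the anti-diagonal $\mathsf{N}$-overlap pattern inside a paired Jordan block established in Lemma~\ref{lem:Jordan-normalizability}: the eigenvector $\vect{x}_\nu=\vect{\xi}_1^{(\nu)}$ at the bottom of a Jordan chain pairs only with the vector at the top of the dual Jordan block of $\mathsf{S\,N}$, which ceases to be an eigenvector of $\mathsf{S\,N}$ as soon as $d_\nu>1$, so no eigenvector of $\mathsf{S\,N}$ (in particular $\mathsf{N}\,\vect{x}_{\nu'}$) can produce a nonzero pairing with $\vect{x}_\nu$.
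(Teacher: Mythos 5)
Your proof is correct and follows essentially the same route as the paper: the paper's Appendix~\ref{app:sub-A2} derives the identity $c_{k'}^{(\nu')}\,\vect{\xi}_1^{(\nu)\dagger}\,\mathsf{N}\,\vect{\xi}_{k'}^{(\nu')}=0$ from Eq.~(\ref{eq:Jordan-orthogonality}), which is exactly your two-way evaluation of $\vect{\xi}_2^{(\nu)\dagger}\,\mathsf{S}\,\vect{x}_{\nu'}$ (specialized to $k=1$, $k'=0$), combined with Lemma~\ref{lem:bi-orthogonality} to pin down $\omega_\nu^\ast=\omega_{\nu'}$. The conjugate bookkeeping and the symmetry argument for $d_{\nu'}=1$ are both handled correctly.
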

\begin{proposition}\label{theor:one-to-one_basis}
It is possible to produce a complete set of basis vectors of $\mathsf{N\,S}$,
by a proper transformation if necessary,
so that each of them could overlap with only one basis vector of $\mathsf{S\,N}$.
One-to-one correspondence is established
between Jordan blocks of $\mathsf{N\,S}$ and $\mathsf{S\,N}$
that contain basis vectors having non-vanishing overlaps.
The correspondence between basis vectors of $\mathsf{N\,S}$ and $\mathsf{S\,N}$
is also one to one.
\end{proposition}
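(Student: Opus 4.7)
The plan is to combine the Jordan-block results already established with the fact that the Jordan canonical form produces a complete basis. Both $\mathsf{N\,S}$ and $\mathsf{S\,N}$ admit complete sets of (generalized) basis vectors spanning the full $2D$-dimensional space. Since $\mathsf{S}^\dagger=\mathsf{S}$, the right basis vectors of $\mathsf{S\,N}$ for eigenvalue $\omega_{\nu'}$ are precisely (hermitian conjugates of) left basis vectors of $\mathsf{N\,S}$ for eigenvalue $\omega_{\nu'}^\ast$, so the pairing under the usual inner product $\vect{y}^\dagger\vect{x}$ is a biorthogonal pairing of left and right generalized eigenvectors of $\mathsf{N\,S}$. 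Lemma~\ref{lem:Jordan-orthogonality} immediately localizes every non-vanishing overlap to pairs of Jordan blocks whose eigenvalues satisfy $\omega_\nu=\omega_{\nu'}^\ast$, so the problem decomposes into separate arguments inside each such paired generalized eigenspace.

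Within one such paired subspace, I would collect all Jordan blocks of $\mathsf{N\,S}$ with eigenvalue $\omega$, of dimensions $d_1^{\mathrm{R}}\ge d_2^{\mathrm{R}}\ge\cdots$, and all Jordan blocks of $\mathsf{S\,N}$ with eigenvalue $\omega^\ast$, of dimensions $d_1^{\mathrm{L}}\ge d_2^{\mathrm{L}}\ge\cdots$. Lemma~\ref{lem:Jordan-normalizability} states that, after a suitable basis transformation inside each block, every basis vector on one side overlaps with at most one basis vector on the other, and that a non-vanishing overlap forces the two hosting blocks to have equal dimension. I would then run an induction on the largest remaining block dimension: pick the largest block on the $\mathsf{N\,S}$ side, pair it (using the equal-dimension constraint of Lemma~\ref{lem:Jordan-normalizability}) with a same-size block on the $\mathsf{S\,N}$ side, remove residual cross overlaps with the remaining blocks by a Gram--Schmidt-type step carried out within the generalized eigenspace, and recurse on what is left.

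The main obstacle — and the step that upgrades the partial matching delivered by Lemma~\ref{lem:Jordan-normalizability} into a bijection — is showing that no basis vector can be left \emph{unpaired}. Here I would invoke completeness: if some $\vect{\xi}_k^{(\nu)}$ had vanishing overlap with every element of the complete $\mathsf{S\,N}$ basis, then $\vect{\xi}_k^{(\nu)}=\vect{0}$, contradicting its role as a basis vector. Thus every basis vector of $\mathsf{N\,S}$ has non-vanishing overlap with at least one basis vector of $\mathsf{S\,N}$, and by the symmetric argument the converse holds as well. Combined with the at-most-one statement this yields exact one-to-one correspondence at the level of basis vectors, and the equal-dimension constraint then forces the matching to respect block boundaries, giving the claimed one-to-one correspondence of Jordan blocks.

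A subtle point I would verify at the end is that the basis transformations used — linear combinations among chain bases inside a single generalized eigenspace of $\mathsf{N\,S}$, and independently inside one of $\mathsf{S\,N}$ — can always be chosen to preserve the Jordan structure~(\ref{eq:Jordan-block}), altering the $c_k^{(\nu)}$ only within the rescaling and rephasing freedom already built into Eq.~(\ref{eq:Jordan}). This follows because the centralizer of a Jordan block inside its invariant subspace acts transitively on normalized chain bases, so the pairing can be installed without disturbing Eq.~(\ref{eq:Jordan}).
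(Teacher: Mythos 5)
Your proposal is correct and follows essentially the same route as the paper's proof in Appendix~\ref{app:sub-A3}: completeness guarantees every basis vector overlaps with at least one partner, Lemmas~\ref{lem:Jordan-orthogonality} and \ref{lem:Jordan-normalizability} confine overlaps to equal-eigenvalue, equal-dimension block pairs, and multiple overlaps are eliminated by recombining whole Jordan chains within a generalized eigenspace. Your ``Gram--Schmidt-type step'' is precisely the paper's explicit recursive recombination $\alpha_{k'}\vect{\xi}_{k'}^{(\nu')}+\beta_{k'}\vect{\xi}_{k'}^{(\nu'')}$ with coefficients tied together by Eq.~(\ref{eq:a&b-recursive}) so that the result is again a Jordan chain --- the one detail worth making explicit, since an unconstrained vector-by-vector orthogonalization would destroy the chain structure.
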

\noindent
This proposition is an expression that the left and the right basis vectors
giving the Jordan representation constitute the inverse matrix of each other.

Proposition~\ref{theor:one-to-one_basis} enables us to define
a basis vector of $\mathsf{S\,N}$
that is dual of individual basis vector of $\mathsf{N\,S}$.
Let us denote the basis vector dual to $\vect{\xi}_k^{(\nu)}$
by $\bbar{\vect{\xi}}_k^{(\nu)}$.
Then $\bbar{\vect{\xi}}_k^{(\nu)\dagger}\,\mathsf{N}\,\vect{\xi}_{k'}^{(\nu')}$
does not vanish only for $\nu=\nu'$ and $k=k'$.
This is a realization of the Jordan block of (\ref{eq:Jordan-block}).
$\bbar{\vect{\xi}}_k^{(\nu)}$ is the $(d_\nu+1-k)$-th basis vector
of the Jordan block for an eigenvector
$\bbar{\vect{x}}_\nu:=\bbar{\vect{\xi}}_{d_\nu}^{(\nu)}$,
which is associated with the eigenvalue $\omega_\nu^\ast$.
Inverting Eq.~(\ref{eq:Jordan}), we obtain
\be \mathsf{S}\,\bbar{\vect{\xi}}_{k-1}^{(\nu)}
 = \omega_\nu^\ast\,\mathsf{N}\,\bbar{\vect{\xi}}_{k-1}^{(\nu)}
 - ic_{k-1}^{(\nu)\ast}\,
 \frac{\vect{\xi}_{k-1}^{(\nu)\dagger}\,\mathsf{N}\,\bbar{\vect{\xi}}_{k-1}^{(\nu)}}
 {\vect{\xi}_k^{(\nu)\dagger}\,\mathsf{N}\,\bbar{\vect{\xi}}_k^{(\nu)}}\,
 \mathsf{N}\,\bbar{\vect{\xi}}_k^{(\nu)}\,,
\label{eq:Jordan-inv}\ee
as confirmed in Appendix~\ref{app:sub-A4}.
This establishes the LR-duality of basis vectors.
\begin{definition}\label{def:LRdual}
The basis vectors $\vect{\xi}_k^{(\nu)}$ and $\bbar{\vect{\xi}}_k^{(\nu)}$
are said to be \textbf{LR-dual} of each other.
\end{definition}
\noindent
When $\vect{\xi}_k^{(\nu)}=\bbar{\vect{\xi}}_k^{(\nu)}$,
it is said \textit{self LR-dual}.
It is reasonable to define $\bbar{\vect{\xi}}_k^{(\nu)}$
so that it would fulfill
$\big|\bbar{\vect{\xi}}_k^{(\nu)\dagger}\,\mathsf{N}\,\vect{\xi}_{k'}^{(\nu')}\big|
=\delta_{\nu\nu'}\,\delta_{kk'}$.
It is not necessarily convenient to assume
$\bbar{\vect{\xi}}_k^{(\nu)\dagger}\,\mathsf{N}\,\vect{\xi}_{k'}^{(\nu')}
=\delta_{\nu\nu'}\,\delta_{kk'}$, though possible.
The normalizability in Def.~\ref{def:normalizable}
is now perceived as a part of the self LR-duality.

\subsection{Redefining eigenvalue problem}\label{subsec:redefine}

Because $\mathsf{S}$ is hermitian,
it is diagonalizable with an appropriate unitary matrix $\mathsf{U}$,
\be \mathsf{S}=\mathsf{U}^{-1}\,\mathrm{diag.}(\lambda_i)\,\mathsf{U}\,.\quad
 (\lambda_i\in\mathbf{R})
\label{eq:S-diag}\ee
By using this expression, $\mathsf{S}^{1/2}$ can be taken as
\be \mathsf{S}^{1/2}=\mathsf{U}^{-1}\,\mathrm{diag.}(\lambda_i^{1/2})\,\mathsf{U}\,.
\label{eq:S-half}\ee
Though $\lambda_i^{1/2}$ is two-valued, it is not important
which value is adopted.
A new matrix is now defined,
\be \tilde{\mathsf{S}}:=\mathsf{S}^{1/2}\,\mathsf{N\,S}^{1/2}\,.
\label{eq:S-tilde}\ee
Obviously, $\tilde{\mathsf{S}}$ is hermitian
only when $\lambda_i\geq 0$ for any $i\,(=1,\cdots,2D)$.

\begin{lemma}\label{lem:tilde-S_eigen}
All the eigenvalues and eigenvectors of $\mathsf{N\,S}$
correspond to those of $\tilde{\mathsf{S}}$, and vice versa.
\end{lemma}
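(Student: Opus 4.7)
The plan is to prove the lemma by exhibiting $\mathsf{S}^{1/2}$ as an intertwining operator between $\mathsf{N\,S}$ and $\tilde{\mathsf{S}}$. Specifically, since $\mathsf{S}^{1/2}$ commutes with itself, the definitions give directly
\be
 \mathsf{S}^{1/2}\,(\mathsf{N\,S}) = (\mathsf{S}^{1/2}\,\mathsf{N}\,\mathsf{S}^{1/2})\,\mathsf{S}^{1/2} = \tilde{\mathsf{S}}\,\mathsf{S}^{1/2}\,,
\ee
so the map $\vect{x}\mapsto\mathsf{S}^{1/2}\vect{x}$ sends every eigensolution of $\mathsf{N\,S}$ to one of $\tilde{\mathsf{S}}$ with the same eigenvalue. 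This furnishes the forward direction: if $\mathsf{N\,S}\vect{x}_\nu=\omega_\nu\vect{x}_\nu$, then $\tilde{\mathsf{S}}(\mathsf{S}^{1/2}\vect{x}_\nu)=\omega_\nu(\mathsf{S}^{1/2}\vect{x}_\nu)$, and whenever $\omega_\nu\neq 0$ the image $\mathsf{S}^{1/2}\vect{x}_\nu$ is automatically nonzero (because $\vect{x}_\nu\in\ker\mathsf{S}$ would force $\omega_\nu\vect{x}_\nu=\mathsf{N\,S}\vect{x}_\nu=\vect{0}$).

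For the reverse direction, when $\mathsf{S}$ is non-singular the conclusion is immediate: $\mathsf{S}^{1/2}$ is invertible, the intertwining identity becomes a similarity relation $\mathsf{N\,S}=\mathsf{S}^{-1/2}\tilde{\mathsf{S}}\,\mathsf{S}^{1/2}$, and every $\tilde{\mathsf{S}}\vect{z}_\nu=\omega_\nu\vect{z}_\nu$ yields $\vect{x}_\nu:=\mathsf{S}^{-1/2}\vect{z}_\nu$ as an eigenvector of $\mathsf{N\,S}$ at the same eigenvalue. The spectra therefore agree with full multiplicities in this generic situation.

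The main obstacle is the singular case, where $\mathsf{S}^{1/2}$ has a nontrivial kernel, and this is what I would treat in most detail. The null eigenvalue is easy: because $\mathsf{N}$ is invertible, $\ker(\mathsf{N\,S})=\ker\mathsf{S}=\ker\mathsf{S}^{1/2}$, and any $\vect{v}\in\ker\mathsf{S}$ automatically satisfies $\tilde{\mathsf{S}}\vect{v}=\mathsf{S}^{1/2}\mathsf{N}(\mathsf{S}^{1/2}\vect{v})=\vect{0}$, so null eigenvectors match. For a nonzero eigenvalue, $\tilde{\mathsf{S}}\vect{z}_\nu=\omega_\nu\vect{z}_\nu$ forces $\vect{z}_\nu=\omega_\nu^{-1}\mathsf{S}^{1/2}\bigl(\mathsf{N}\mathsf{S}^{1/2}\vect{z}_\nu\bigr)\in\mathrm{range}\,\mathsf{S}^{1/2}$, so one can write $\vect{z}_\nu=\mathsf{S}^{1/2}\vect{x}_\nu$; plugging in and stripping off $\mathsf{S}^{1/2}$ from the left gives $\mathsf{N\,S}\vect{x}_\nu-\omega_\nu\vect{x}_\nu\in\ker\mathsf{S}^{1/2}$, and this residual can be absorbed by shifting $\vect{x}_\nu$ by an element of $\ker\mathsf{S}$ (which is annihilated by both $\mathsf{N\,S}$ and $\mathsf{S}^{1/2}$), producing a genuine eigenvector of $\mathsf{N\,S}$ mapped to $\vect{z}_\nu$.

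The subtlety I would flag but not try to resolve within this lemma is that $\tilde{\mathsf{S}}$ may have null vectors outside $\ker\mathsf{S}$; by the argument above these correspond to vectors $\vect{x}$ with $\mathsf{N\,S}\vect{x}\in\ker\mathsf{S}=\ker(\mathsf{N\,S})$, i.e.\ to generalized null eigenvectors forming Jordan blocks of $\mathsf{N\,S}$ at eigenvalue zero — consistent with Sec.~\ref{subsec:Jordan} and requiring no contradiction here. Thus the correspondence asserted by the lemma is realized concretely by the intertwiner $\mathsf{S}^{1/2}$, with eigenvalues (and, read in terms of generalized eigenspaces, their multiplicities) preserved in both directions.
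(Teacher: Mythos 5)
Your proof is correct and follows essentially the same route as the paper: both directions are obtained from the intertwining identity $\mathsf{S}^{1/2}(\mathsf{N\,S})=\tilde{\mathsf{S}}\,\mathsf{S}^{1/2}$, with the kernel of $\mathsf{S}^{1/2}$ handled separately as the shared null eigenspace. The only cosmetic difference is in the converse: the paper takes the explicit preimage $\mathsf{N}\,\mathsf{S}^{1/2}\vect{y}_\nu$, for which $\mathsf{S}\,(\mathsf{N\,S}^{1/2}\vect{y}_\nu)=\omega_\nu\,\mathsf{N}\,(\mathsf{N\,S}^{1/2}\vect{y}_\nu)$ holds identically, whereas your generic choice of preimage requires the (correctly executed) shift by an element of $\ker\mathsf{S}$ to cancel the residual.
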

\begin{proof}
If $\mathsf{S}\,\vect{x}_\nu=\omega_\nu\mathsf{N}\,\vect{x}_\nu$,
$\tilde{\mathsf{S}}\,(\mathsf{S}^{1/2}\vect{x}_\nu)
 =\omega_\nu(\mathsf{S}^{1/2}\vect{x}_\nu)$ follows.
Therefore $(\omega_\nu, \mathsf{S}^{1/2}\vect{x}_\nu)$
gives an eigensolution of $\tilde{\mathsf{S}}$
unless $\mathsf{S}^{1/2}\vect{x}_\nu=\vect{0}$.
Conversely, if $\tilde{\mathsf{S}}\,\vect{y}_\nu=\omega_\nu\vect{y}_\nu$,
$\mathsf{S}\,(\mathsf{N\,S}^{1/2}\vect{y}_\nu)
 =\omega_\nu\,\mathsf{N}\,(\mathsf{N\,S}^{1/2}\vect{y}_\nu)$ follows.\\
If $\mathsf{S}^{1/2}\vect{x}_\nu=\mathbf{0}$,
we have $\mathsf{S}\,\vect{x}_\nu=\tilde{\mathsf{S}}\,\vect{x}_\nu=0$,
indicating $\vect{x}_\nu$ is an eigenvector corresponding to the null eigenvalue
both of $\mathsf{N\,S}$ and $\tilde{\mathsf{S}}$.
\end{proof}

Thus the RPA equation is equivalent
to the eigenvalue problem of $\tilde{\mathsf{S}}$;
$\tilde{\mathsf{S}}\,\vect{x}_\nu=\omega_\nu\vect{x}_\nu$.
This redefinition is applied to investigate solvability of the RPA equation
in Sec.~\ref{subsec:real-sol}.

\section{Decomposition of vector space}\label{sec:decomp}

The whole vector space $\mathcal{V}$,
in which the stability matrix $\mathsf{S}$ is defined,
can be decomposed via the basis vectors produced by $\mathsf{N\,S}$
or those by $\mathsf{S\,N}$.
This furnishes further discussion on properties of the RPA solutions.

Recalling the LR-duality explored in Sec.~\ref{sec:general},
we obtain the projector
which separates out the direction along a certain basis vector
of $\mathsf{N\,S}$, as in Ref.~\cite{ref:TV62},
\be \mathsf{\Lambda}_{\nu,k} :=
 \frac{\vect{\xi}_k^{(\nu)}\,\bbar{\vect{\xi}}_k^{(\nu)\dagger}\,\mathsf{N}}
 {\bbar{\vect{\xi}}_k^{(\nu)\dagger}\,\mathsf{N}\,\vect{\xi}_k^{(\nu)}}\,.
\label{eq:proj-nuk}\ee
Obviously $\mathsf{\Lambda}_{\nu,k}\,\mathsf{\Lambda}_{\nu',k'}
=\delta_{\nu\nu'}\,\delta_{kk'}\,\mathsf{\Lambda}_{\nu,k}$.
The projector separating out the subspace
corresponding to the Jordan block (including the $d_\nu=1$ case)
generated from the eigenvector $\vect{x}_\nu$ is obtained by
\be \mathsf{\Lambda}_\nu := \sum_{k=1}^{d_\nu} \mathsf{\Lambda}_{\nu,k}\,. \ee
The projector $\mathsf{\Lambda}_\nu$ defines a subspace $\mathcal{W}_\nu$,
\be \mathcal{W}_\nu := \bigg\{ \sum_{k=1}^{d_\nu} a_k\,\vect{\xi}_k^{(\nu)};
 a_k\in\mathbf{C}\bigg\}\,, \label{eq:subspace-nu}\ee
for which $\mathsf{\Lambda}_\nu\mathcal{W}_\nu=\mathcal{W}_\nu$
and $(\mathsf{1}-\mathsf{\Lambda}_\nu)\mathcal{W}_\nu=\emptyset$ (empty set)
are satisfied.
The completeness is expressed as
\be \sum_\nu \mathsf{\Lambda}_\nu = \mathsf{1}\,,\quad
 \mathcal{V} = \bigoplus_\nu \mathcal{W}_\nu\,.\label{eq:proj-nu}\ee

In association with the LR-duality, one may consider
\be \bbar{\mathcal{W}}_\nu := \bigg\{ \sum_{k=1}^{d_\nu}
 a_k\,\bbar{\vect{\xi}}_k^{(\nu)}; a_k\in\mathbf{C}\bigg\}\,.
\label{eq:subspace-nubar}\ee
The projector corresponding to $\bbar{\mathcal{W}}_\nu$
is given by $\mathsf{N\,\Lambda}_\nu^\dagger\,\mathsf{N}$.
In order for the arguments in Sec.~\ref{sec:general} to be applicable
even after a certain projection,
it is desired to respect the UL-duality, as well as the LR-duality.
Therefore the UL-dual subspace is also considered,
\be \mathsf{\Sigma}_x\mathcal{W}_\nu^\ast := \bigg\{ \sum_{k=1}^{d_\nu}
 a_k\,\mathsf{\Sigma}_x\vect{\xi}_k^{(\nu)\ast}; a_k\in\mathbf{C}\bigg\}\,.
\label{eq:subspace-signu}\ee
The projector relevant to $\mathsf{\Sigma}_x\mathcal{W}_\nu^\ast$ is obtained by
\be \sum_{k=1}^{d_\nu} \frac{\mathsf{\Sigma}_x\vect{\xi}_k^{(\nu)\ast}\,
 \bbar{\vect{\xi}}_k^{(\nu)T}\mathsf{\Sigma}_x\,\mathsf{N}}
 {\bbar{\vect{\xi}}_k^{(\nu)T}\mathsf{\Sigma}_x\,\mathsf{N}\,
 \mathsf{\Sigma}_x\vect{\xi}_k^{(\nu)\ast}}
= \sum_{k=1}^{d_\nu} \frac{\mathsf{\Sigma}_x\vect{\xi}_k^{(\nu)\ast}\,
 \bbar{\vect{\xi}}_k^{(\nu)T}\,\mathsf{N}\,\mathsf{\Sigma}_x}
 {\bbar{\vect{\xi}}_k^{(\nu)T}\,\mathsf{N}\,\vect{\xi}_k^{(\nu)\ast}}
= \mathsf{\Sigma}_x\,\mathsf{\Lambda}_\nu^\ast\,\mathsf{\Sigma}_x\,. \ee
The subspace $\mathsf{\Sigma}_x\bbar{\mathcal{W}}_\nu^\ast$
and its relevant projector are defined as well.
Depending on $\omega_\nu$,
some of $\mathcal{W}_\nu$, $\bbar{\mathcal{W}}_\nu$,
$\mathsf{\Sigma}_x\mathcal{W}_\nu^\ast$
and $\mathsf{\Sigma}_x\bbar{\mathcal{W}}_\nu^\ast$ could be identical.
We shall use a collective index $[\nu]$
to stand for their direct sum,
\be \mathcal{W}_{[\nu]} := \mathcal{W}_\nu\oplus\bbar{\mathcal{W}}_\nu
 \oplus\mathsf{\Sigma}_x\mathcal{W}_\nu^\ast
 \oplus\mathsf{\Sigma}_x\bbar{\mathcal{W}}_\nu^\ast\,,\ee
apart from their overlap.
The projector $\mathsf{\Lambda}_{[\nu]}$ on $\mathcal{W}_{[\nu]}$
is defined by sum of the projectors.
Though in restricted cases,
a similar projection is considered in Ref.~\cite{ref:Don99}.
It is straightforward to show the following properties
of $\mathsf{\Lambda}_{[\nu]}$,
\be \mathsf{\Lambda}_{[\nu]}
 =\mathsf{N\,\Lambda}_{[\nu]}^\dagger\,\mathsf{N}\,,\quad
 \mathsf{\Lambda}_{[\nu]}\,\mathsf{\Lambda}_{[\nu']}
 =\delta_{[\nu],[\nu']}\,\mathsf{\Lambda}_{[\nu]}\,,\quad
 \mathsf{\Sigma}_x\,\mathsf{\Lambda}_{[\nu]}^\ast\,\mathsf{\Sigma}_x
 = \mathsf{\Lambda}_{[\nu]}\,.
\label{eq:proj-prop}\ee
\begin{lemma}\label{lem:dim-Lambda}
$d_{[\nu]}:=\dim \mathcal{W}_{[\nu]}$ must be even.
\end{lemma}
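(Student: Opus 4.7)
The plan is to identify a real form of $\mathcal{W}_{[\nu]}$ using the antilinear UL-duality map, and exhibit on it a nondegenerate antisymmetric $\mathbf{R}$-bilinear form inherited from $\mathsf{N}$; this symplectic structure then forces the dimension to be even, and the argument applies uniformly to all classes of $\omega_\nu$ without a separate case analysis for $\omega_\nu=0$.

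First I set $J\vect{x}:=\mathsf{\Sigma}_x\vect{x}^{\ast}$. Because $\mathsf{\Sigma}_x^{2}=\mathsf{1}$ and complex conjugation is antilinear, $J$ is an antilinear involution with $J^{2}=\mathsf{1}$. By construction the four summands of $\mathcal{W}_{[\nu]}$ are permuted in pairs by $J$ (for instance $J\mathcal{W}_\nu=\mathsf{\Sigma}_x\mathcal{W}_\nu^{\ast}$ and $J\bbar{\mathcal{W}}_\nu=\mathsf{\Sigma}_x\bbar{\mathcal{W}}_\nu^{\ast}$), so $J\mathcal{W}_{[\nu]}=\mathcal{W}_{[\nu]}$. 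Define $V_{\mathbf{R}}:=\{\vect{x}\in\mathcal{W}_{[\nu]}\,:\,J\vect{x}=\vect{x}\}$. The standard splitting $\vect{x}=\frac{1}{2}(\vect{x}+J\vect{x})+i\cdot\frac{1}{2i}(\vect{x}-J\vect{x})$ shows $\mathcal{W}_{[\nu]}=V_{\mathbf{R}}\oplus iV_{\mathbf{R}}$ as $\mathbf{R}$-vector spaces, hence $\dim_{\mathbf{R}}V_{\mathbf{R}}=d_{[\nu]}$.

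Next I define $\omega(\vect{x},\vect{y}):=-i\,\vect{x}^{\dagger}\mathsf{N}\vect{y}$ for $\vect{x},\vect{y}\in V_{\mathbf{R}}$. The identity $(\mathsf{\Sigma}_x\vect{x}^{\ast})^{\dagger}\mathsf{N}(\mathsf{\Sigma}_x\vect{y}^{\ast})=-\vect{y}^{\dagger}\mathsf{N}\vect{x}$ (already recorded in the text just after Cor.~\ref{lem:sym-sigma}), together with $J\vect{x}=\vect{x}$, $J\vect{y}=\vect{y}$ and the Hermiticity relation $\vect{y}^{\dagger}\mathsf{N}\vect{x}=(\vect{x}^{\dagger}\mathsf{N}\vect{y})^{\ast}$, yields $\vect{x}^{\dagger}\mathsf{N}\vect{y}=-(\vect{x}^{\dagger}\mathsf{N}\vect{y})^{\ast}$. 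Therefore $\vect{x}^{\dagger}\mathsf{N}\vect{y}$ is purely imaginary, $\omega$ is $\mathbf{R}$-valued, manifestly $\mathbf{R}$-bilinear, and antisymmetric. For nondegeneracy I use that $\mathcal{W}_{[\nu]}$ contains, by its very definition, the LR-dual of each of its basis vectors, and by Prop.~\ref{theor:one-to-one_basis} those LR-dual pairings are nonvanishing; hence $\mathsf{N}$ restricted to $\mathcal{W}_{[\nu]}$ is nondegenerate. If $\vect{x}\in V_{\mathbf{R}}$ satisfies $\omega(\vect{x},\vect{y})=0$ for every $\vect{y}\in V_{\mathbf{R}}$, $\mathbf{C}$-linear extension along $\mathcal{W}_{[\nu]}=V_{\mathbf{R}}+iV_{\mathbf{R}}$ gives $\vect{x}^{\dagger}\mathsf{N}\vect{z}=0$ for every $\vect{z}\in\mathcal{W}_{[\nu]}$, forcing $\vect{x}=\vect{0}$.

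Thus $(V_{\mathbf{R}},\omega)$ is a real symplectic vector space, and any such space has even real dimension; therefore $d_{[\nu]}=\dim_{\mathbf{R}}V_{\mathbf{R}}$ is even. The step I expect to require the most care is the derivation of the \emph{anti}symmetry (rather than symmetry) of $\omega$: the minus sign originates in the anticommutation $\mathsf{\Sigma}_x\mathsf{N}+\mathsf{N}\mathsf{\Sigma}_x=\mathsf{0}$ from Sec.~\ref{sec:RPAeq}, and it is what ultimately enforces evenness---had $\mathsf{\Sigma}_x$ commuted with $\mathsf{N}$, the induced real form would be a pseudo-inner product instead of a symplectic form, and no parity restriction would follow.
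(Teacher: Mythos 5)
Your proof is correct, and it takes a genuinely different route from the paper's. The paper argues by cases: unless the four summands $\mathcal{W}_\nu$, $\bbar{\mathcal{W}}_\nu$, $\mathsf{\Sigma}_x\mathcal{W}_\nu^\ast$, $\mathsf{\Sigma}_x\bbar{\mathcal{W}}_\nu^\ast$ all coincide, evenness is immediate from Lemma~\ref{lem:Jordan-sym}, since the distinct summands pair off with equal dimension $d_\nu$; in the fully degenerate case it shows $\omega_\nu=0$ and $\bbar{\vect{\xi}}_k^{(\nu)}=\vect{\xi}_{d_\nu+1-k}^{(\nu)}$, so that an odd $d_\nu$ would force a self LR-dual middle vector whose overlap $\vect{\xi}^\dagger\,\mathsf{N}\,\vect{\xi}$ is real yet must equal $\pm i$ by Eq.~(\ref{eq:imag-norm}) --- a contradiction that requires a forward reference to Lemma~\ref{lem:pure-imag-Jordan}. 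You instead treat all classes of $\omega_\nu$ uniformly: the UL-duality $J=\mathsf{\Sigma}_x(\cdot)^\ast$ is an antilinear involution preserving $\mathcal{W}_{[\nu]}$, its fixed real form has real dimension $d_{[\nu]}$, and $-i\,\vect{x}^\dagger\,\mathsf{N}\,\vect{y}$ restricted to that real form is real-valued, antisymmetric (the sign coming from $\mathsf{\Sigma}_x\,\mathsf{N}+\mathsf{N}\,\mathsf{\Sigma}_x=\mathsf{0}$, via the identity recorded after Corollary~\ref{lem:sym-sigma}) and nondegenerate, the last point being exactly what Prop.~\ref{theor:one-to-one_basis} supplies because $\mathcal{W}_{[\nu]}$ is closed under LR-duality and the diagonal pairings $\bbar{\vect{\xi}}_k^{(\nu)\dagger}\,\mathsf{N}\,\vect{\xi}_k^{(\nu)}$ do not vanish. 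A symplectic real vector space is even-dimensional, and the claim follows. What your argument buys is conceptual economy: no case split, no reliance on the later normalization convention of Sec.~5.2, and an identification of the evenness as the parity constraint of a symplectic structure induced jointly by the two dualities. What the paper's argument buys is concreteness: it pinpoints exactly which basis vector obstructs an odd-dimensional doubly self-dual block, which is the picture reused in Prop.~\ref{theor:null-self-dual} and Appendix~\ref{app:sub-B1}.
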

\begin{proof}
Unless $\mathcal{W}_\nu=\bbar{\mathcal{W}}_\nu
=\mathsf{\Sigma}_x\mathcal{W}_\nu^\ast
=\mathsf{\Sigma}_x\bbar{\mathcal{W}}_\nu^\ast$,
this is obvious from Lemma~\ref{lem:Jordan-sym}.
If $\mathcal{W}_\nu=\bbar{\mathcal{W}}_\nu
=\mathsf{\Sigma}_x\mathcal{W}_\nu^\ast
=\mathsf{\Sigma}_x\bbar{\mathcal{W}}_\nu^\ast$,
$\omega_\nu=0$ and $\bbar{\vect{\xi}}_k^{(\nu)}=\vect{\xi}_{d_\nu+1-k}^{(\nu)}$,
as is clear from the argument
with respect to Prop.~\ref{theor:one-to-one_basis}
in Appendix~\ref{app:sub-A3}.
Then, if $d_\nu$ is odd, we have
$\bbar{\vect{\xi}}_{(d_\nu+1)/2}^{(\nu)}=\vect{\xi}_{(d_\nu+1)/2}^{(\nu)}$,
contradictory to Lemma~\ref{lem:pure-imag-Jordan}
(Eq.~(\ref{eq:imag-norm}), to be more precise),
which leads to $\bbar{\vect{\xi}}_{(d_\nu+1)/2}^{(\nu)\dagger}\,\mathsf{N}\,
\vect{\xi}_{(d_\nu+1)/2}^{(\nu)}=0$.
See also the arguments in Appendix~\ref{app:sub-B1}.
\end{proof}

For an arbitrary $2D\times 2D$ matrix $\mathsf{M}$,
its projection onto the subspace $\mathcal{W}_{[\nu]}$
is obtained by
\be \mathsf{M}_{[\nu]} := \mathsf{\Lambda}_{[\nu]}^\dagger\,\mathsf{M}\,
 \mathsf{\Lambda}_{[\nu]}\,. \ee
Of next interest is how the projection affects
the RPA equation and the dualities.
\begin{proposition}\label{theor:SN-proj}
The projections of $\mathsf{S}$ and $\mathsf{N}$ satisfy
\be \mathsf{\Lambda}_{[\nu]}^\dagger\,\mathsf{S}\,\mathsf{\Lambda}_{[\nu']}
 = \delta_{[\nu],[\nu']}\,\mathsf{S}_{[\nu]}\,,\quad
 \mathsf{\Lambda}_{[\nu]}^\dagger\,\mathsf{N}\,\mathsf{\Lambda}_{[\nu']}
 = \delta_{[\nu],[\nu']}\,\mathsf{N}_{[\nu]}\,. \nonumber\ee
\end{proposition}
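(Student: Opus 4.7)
The plan is to reduce both identities to the projector orthogonality $\mathsf{\Lambda}_{[\nu]}\mathsf{\Lambda}_{[\nu']}=\delta_{[\nu],[\nu']}\,\mathsf{\Lambda}_{[\nu]}$ of Eq.~(\ref{eq:proj-prop}), exploiting the rearrangement $\mathsf{\Lambda}_{[\nu]}^{\dagger}\,\mathsf{N}=\mathsf{N}\,\mathsf{\Lambda}_{[\nu]}$ of the same equation (obtained by right-multiplying the identity $\mathsf{\Lambda}_{[\nu]}=\mathsf{N}\mathsf{\Lambda}_{[\nu]}^{\dagger}\mathsf{N}$ by $\mathsf{N}$ and using $\mathsf{N}^{2}=\mathsf{1}$). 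The $\mathsf{N}$ claim is then a one-line calculation,
\[
\mathsf{\Lambda}_{[\nu]}^{\dagger}\,\mathsf{N}\,\mathsf{\Lambda}_{[\nu']}
=\mathsf{N}\,\mathsf{\Lambda}_{[\nu]}\,\mathsf{\Lambda}_{[\nu']}
=\delta_{[\nu],[\nu']}\,\mathsf{N}\,\mathsf{\Lambda}_{[\nu]},
\]
the right-hand side being exactly $\mathsf{N}_{[\nu]}$ when $[\nu]=[\nu']$.

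For the $\mathsf{S}$ claim the crux is the single invariance statement
\[
\mathsf{S}\,\mathcal{W}_{[\nu']}\ \subseteq\ \mathsf{N}\,\mathcal{W}_{[\nu']},
\]
which I would verify summand by summand on the four subspaces constituting $\mathcal{W}_{[\nu']}$: for $\vect{\xi}_{k}^{(\nu')}\in\mathcal{W}_{\nu'}$ by the defining Jordan relation Eq.~(\ref{eq:Jordan}); for $\mathsf{\Sigma}_{x}\vect{\xi}_{k}^{(\nu')\ast}$ by its UL-conjugate Eq.~(\ref{eq:Jordan-sym}); for $\bbar{\vect{\xi}}_{k}^{(\nu')}\in\bbar{\mathcal{W}}_{\nu'}$ by Eq.~(\ref{eq:Jordan-inv}); and for $\mathsf{\Sigma}_{x}\bbar{\vect{\xi}}_{k}^{(\nu')\ast}$ by conjugating Eq.~(\ref{eq:Jordan-inv}) and applying the symmetries of Eq.~(\ref{eq:S-prop}). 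In every case $\mathsf{S}$ applied to the basis vector lands in $\mathsf{N}$ times the \emph{same} summand. Given this, for $[\nu]\ne[\nu']$ and an arbitrary $\vect{v}$,
\[
\mathsf{\Lambda}_{[\nu]}^{\dagger}\,\mathsf{S}\,\mathsf{\Lambda}_{[\nu']}\,\vect{v}
\in\mathsf{\Lambda}_{[\nu]}^{\dagger}\,\mathsf{N}\,\mathcal{W}_{[\nu']}
=\mathsf{N}\,\mathsf{\Lambda}_{[\nu]}\,\mathcal{W}_{[\nu']}=\{\vect{0}\},
\]
using $\mathcal{W}_{[\nu']}=\mathsf{\Lambda}_{[\nu']}\mathcal{V}$ together with projector orthogonality; the case $[\nu]=[\nu']$ is the definition of $\mathsf{S}_{[\nu]}$.

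The main obstacle is the mechanical case analysis in the invariance claim — in particular verifying that the image under $\mathsf{S}$ of each of the four summands does not mix summands but stays in $\mathsf{N}$ times the same one, and handling the degenerate situations in which some of the four subspaces coincide (e.g.\ real or purely imaginary $\omega_{\nu}$, where $\mathcal{W}_{\nu}$ may equal one of its UL- or LR-duals). Once that invariance is established, the proposition is pure algebra in the three identities of Eq.~(\ref{eq:proj-prop}).
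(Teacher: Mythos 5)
Your proposal is correct and follows essentially the same route as the paper: the $\mathsf{N}$ identity is the same one-line use of $\mathsf{\Lambda}_{[\nu]}=\mathsf{N}\,\mathsf{\Lambda}_{[\nu]}^{\dagger}\,\mathsf{N}$ plus projector orthogonality, and your invariance claim $\mathsf{S}\,\mathcal{W}_{[\nu']}\subseteq\mathsf{N}\,\mathcal{W}_{[\nu']}$ is exactly what the paper establishes in matrix-element form via Eq.~(\ref{eq:Jordan-me}) before expanding $\mathsf{\Lambda}_{[\nu]}^{\dagger}\,\mathsf{S}\,\mathsf{\Lambda}_{[\nu']}=\mathsf{N}\,\mathsf{\Lambda}_{[\nu]}\,\mathsf{N\,S}\,\mathsf{\Lambda}_{[\nu']}$ in Eq.~(\ref{eq:Snu-separation}). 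The only difference is presentational (subspace invariance versus explicit expansion in the dual bases), so no gap remains.
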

\begin{proof}
From Eq.~(\ref{eq:proj-prop}),
\be
 \mathsf{\Lambda}_{[\nu]}^\dagger\,\mathsf{N}\,\mathsf{\Lambda}_{[\nu']}
 = \mathsf{N}\,\mathsf{\Lambda}_{[\nu]}\,\mathsf{\Lambda}_{[\nu']}
 = \delta_{[\nu],[\nu']}\,\mathsf{N}\,\mathsf{\Lambda}_{[\nu]}
 = \delta_{[\nu],[\nu']}\,\mathsf{\Lambda}_{[\nu]}^\dagger\,\mathsf{N}\,
 \mathsf{\Lambda}_{[\nu]}\,.\ee
Concerning $\mathsf{S}_{[\nu]}$, Eq.~(\ref{eq:Jordan}) yields
\be \bbar{\vect{\xi}}_k^{(\nu)\dagger}\,\mathsf{S}\,\vect{\xi}_{k'}^{(\nu')}
 = \delta_{\nu\nu'}\,\big[\omega_\nu\,\delta_{kk'}
 + ic_k^{(\nu)}\,\delta_{k+1,k'}\big]\,
 \bbar{\vect{\xi}}_k^{(\nu)\dagger}\,\mathsf{N}\,\vect{\xi}_k^{(\nu)}\,,
\label{eq:Jordan-me}\ee
and therefore
\be\begin{split}
 \mathsf{\Lambda}_{[\nu]}^\dagger\,\mathsf{S}\,\mathsf{\Lambda}_{[\nu']}
 &= \mathsf{N}\,\mathsf{\Lambda}_{[\nu]}\,\mathsf{N\,S}\,\mathsf{\Lambda}_{[\nu']}
 = \mathsf{N}\,\bigg[\sum_{\nu\in[\nu]}\sum_{k=1}^{d_\nu}
 \frac{\vect{\xi}_k^{(\nu)}\,\bbar{\vect{\xi}}_k^{(\nu)\dagger}\,\mathsf{N}}
 {\bbar{\vect{\xi}}_k^{(\nu)\dagger}\,\mathsf{N}\,\vect{\xi}_k^{(\nu)}}\bigg]\,
 \mathsf{N\,S}\,\bigg[\sum_{\nu'\in[\nu']}\sum_{k'=1}^{d_{\nu'}}
 \frac{\vect{\xi}_{k'}^{(\nu')}\,\bbar{\vect{\xi}}_{k'}^{(\nu')\dagger}\,\mathsf{N}}
 {\bbar{\vect{\xi}}_{k'}^{(\nu')\dagger}\,\mathsf{N}\,\vect{\xi}_{k'}^{(\nu')}}\bigg]\\
 &= \delta_{[\nu],[\nu']}\,\mathsf{N}\,\sum_{\nu\in[\nu]}\bigg[
 \omega_\nu\sum_{k=1}^{d_\nu}\mathsf{\Lambda}_{\nu,k}
 +i\sum_{k=1}^{d_\nu}c_k^{(\nu)}\,
 \frac{\vect{\xi}_k^{(\nu)}\,\bbar{\vect{\xi}}_{k+1}^{(\nu)\dagger}\,\mathsf{N}}
 {\bbar{\vect{\xi}}_{k+1}^{(\nu)\dagger}\,\mathsf{N}\,\vect{\xi}_{k+1}^{(\nu)}}\bigg]
 = \delta_{[\nu],[\nu']}\,\mathsf{S}_{[\nu]}\,.
\end{split}\label{eq:Snu-separation}\ee
The last equality follows
because the quantity summed over $\nu$ does not depend on $[\nu']$.
\end{proof}
\begin{lemma}\label{lem:proj-sym}
$\mathsf{S}_{[\nu]}$ inherits the symmetry properties of Eq.~(\ref{eq:S-prop}),
\be \mathsf{S}_{[\nu]}=\mathsf{S}_{[\nu]}^\dagger\,,\quad
\mathsf{\Sigma}_x\,\mathsf{S}_{[\nu]}^\ast\,\mathsf{\Sigma}_x=\mathsf{S}_{[\nu]}\,.
\nonumber\ee
\end{lemma}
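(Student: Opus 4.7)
The plan is to derive both identities by a direct computation that pushes the dagger (or complex conjugation combined with $\mathsf{\Sigma}_x$) through the defining expression $\mathsf{S}_{[\nu]} = \mathsf{\Lambda}_{[\nu]}^\dagger\,\mathsf{S}\,\mathsf{\Lambda}_{[\nu]}$, invoking the symmetries of $\mathsf{S}$ in Eq.~(\ref{eq:S-prop}) and the corresponding symmetries of $\mathsf{\Lambda}_{[\nu]}$ listed in Eq.~(\ref{eq:proj-prop}).

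For hermiticity, I would simply take the dagger of the definition. Since $\mathsf{S}^\dagger=\mathsf{S}$, the only thing to notice is that $(\mathsf{\Lambda}_{[\nu]}^\dagger\,\mathsf{S}\,\mathsf{\Lambda}_{[\nu]})^\dagger = \mathsf{\Lambda}_{[\nu]}^\dagger\,\mathsf{S}\,\mathsf{\Lambda}_{[\nu]}$, giving $\mathsf{S}_{[\nu]}^\dagger=\mathsf{S}_{[\nu]}$ in one line. This part is essentially immediate.

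For the $\mathsf{\Sigma}_x$-symmetry, I would start from
\be
 \mathsf{\Sigma}_x\,\mathsf{S}_{[\nu]}^\ast\,\mathsf{\Sigma}_x
 = \mathsf{\Sigma}_x\,\mathsf{\Lambda}_{[\nu]}^T\,\mathsf{S}^\ast\,
 \mathsf{\Lambda}_{[\nu]}^\ast\,\mathsf{\Sigma}_x \nonumber
\ee
and then insert $\mathsf{\Sigma}_x^2=\mathsf{1}$ between the three factors to obtain
\be
 \mathsf{\Sigma}_x\,\mathsf{S}_{[\nu]}^\ast\,\mathsf{\Sigma}_x
 = (\mathsf{\Sigma}_x\,\mathsf{\Lambda}_{[\nu]}^T\,\mathsf{\Sigma}_x)\,
 (\mathsf{\Sigma}_x\,\mathsf{S}^\ast\,\mathsf{\Sigma}_x)\,
 (\mathsf{\Sigma}_x\,\mathsf{\Lambda}_{[\nu]}^\ast\,\mathsf{\Sigma}_x)\,. \nonumber
\ee
The middle factor equals $\mathsf{S}$ by Eq.~(\ref{eq:S-prop}), and the rightmost factor equals $\mathsf{\Lambda}_{[\nu]}$ by the third identity of Eq.~(\ref{eq:proj-prop}). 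The leftmost factor is handled by taking the dagger of that same identity: $(\mathsf{\Sigma}_x\,\mathsf{\Lambda}_{[\nu]}^\ast\,\mathsf{\Sigma}_x)^\dagger = \mathsf{\Sigma}_x\,\mathsf{\Lambda}_{[\nu]}^T\,\mathsf{\Sigma}_x$, which therefore equals $\mathsf{\Lambda}_{[\nu]}^\dagger$. Assembling these three substitutions yields $\mathsf{\Lambda}_{[\nu]}^\dagger\,\mathsf{S}\,\mathsf{\Lambda}_{[\nu]}=\mathsf{S}_{[\nu]}$, as desired.

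The only real subtlety — and what I would treat most carefully to avoid sign or conjugation errors — is bookkeeping for $\mathsf{\Lambda}_{[\nu]}^\dagger$, $\mathsf{\Lambda}_{[\nu]}^T$, and $\mathsf{\Lambda}_{[\nu]}^\ast$ when moving $\mathsf{\Sigma}_x$ around them, since Eq.~(\ref{eq:proj-prop}) only gives the identity in the form $\mathsf{\Sigma}_x\,\mathsf{\Lambda}_{[\nu]}^\ast\,\mathsf{\Sigma}_x=\mathsf{\Lambda}_{[\nu]}$; the other needed variants follow by taking dagger (or transpose) and using $\mathsf{\Sigma}_x^\dagger=\mathsf{\Sigma}_x$. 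Aside from this small gymnastics, no deeper structural input beyond Eqs.~(\ref{eq:S-prop}) and (\ref{eq:proj-prop}) is needed, so no serious obstacle is anticipated.
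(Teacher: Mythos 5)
Your proposal is correct and follows essentially the same route as the paper: hermiticity is immediate from the definition together with $\mathsf{S}=\mathsf{S}^\dagger$, and the $\mathsf{\Sigma}_x$-symmetry is obtained by the same one-line computation pushing $\mathsf{\Sigma}_x$ through $\mathsf{\Lambda}_{[\nu]}^T\,\mathsf{S}^\ast\,\mathsf{\Lambda}_{[\nu]}^\ast$ using Eqs.~(\ref{eq:S-prop}) and (\ref{eq:proj-prop}). Your explicit insertion of $\mathsf{\Sigma}_x^2=\mathsf{1}$ and the daggered variant of the projector identity is just a slightly more spelled-out version of the paper's chain of equalities.
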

\begin{proof}
The hermiticity of $\mathsf{S}_{[\nu]}$ is obvious from its definition.
The second equation is proven as
\be \mathsf{\Sigma}_x\,\mathsf{S}_{[\nu]}^\ast\,\mathsf{\Sigma}_x
 = \mathsf{\Sigma}_x\,\mathsf{\Lambda}_{[\nu]}^T\,\mathsf{S}^\ast\,
 \mathsf{\Lambda}_{[\nu]}^\ast\,\mathsf{\Sigma}_x
 = \mathsf{\Lambda}_{[\nu]}^\dagger\,\mathsf{\Sigma}_x\,\mathsf{S}^\ast\,
 \mathsf{\Sigma}_x\,\mathsf{\Lambda}_{[\nu]}
 = \mathsf{\Lambda}_{[\nu]}^\dagger\,\mathsf{S}\,\mathsf{\Lambda}_{[\nu]}
 = \mathsf{S}_{[\nu]}\,,
\ee
from Eq.~(\ref{eq:proj-prop}).
\end{proof}
\noindent
Owing to Prop.~\ref{theor:SN-proj} and Lemma~\ref{lem:proj-sym},
$\mathsf{S}_{[\nu]}$ defines the RPA equation within $\mathcal{W}_{[\nu]}$
with maintaining both the LR- and UL-dualities\footnote{
A $d_{[\nu]}$-dimensional representation of $\mathsf{M}_{[\nu]}$
is provided by the matrix element
$\boldsymbol{\xi}_k^{(\nu)\dagger}\,\mathsf{N}\,\boldsymbol{\xi}_{k'}^{(\nu')}
/ \sqrt{|(\bar{\boldsymbol{\xi}}_k^{(\nu)\dagger}\,\mathsf{N}\,
\boldsymbol{\xi}_k^{(\nu)})\,(\bar{\boldsymbol{\xi}}_{k'}^{(\nu')\dagger}\,
\mathsf{N}\,\boldsymbol{\xi}_{k'}^{(\nu')})|}$.
}.
Thus the RPA equation is decomposed to the equation in each subspace $[\nu]$.
Furthermore, Prop.~\ref{theor:SN-proj} ensures
that the same holds for a direct sum of subspaces
$\mathcal{W}_{[\nu]}\oplus\mathcal{W}_{[\nu']}$,
via $(\mathsf{\Lambda}_{[\nu]}+\mathsf{\Lambda}_{[\nu']})^\dagger\,
\mathsf{S}\,(\mathsf{\Lambda}_{[\nu]}+\mathsf{\Lambda}_{[\nu']})
=\mathsf{S}_{[\nu]+[\nu']}$, and so forth.

The whole space $\mathcal{V}$ is thus decomposed
into the direct sum of $\mathcal{W}_{[\nu]}$,
\be \mathcal{V} = \bigoplus_{[\nu]} \mathcal{W}_{[\nu]}\,.\ee
Let us denote the complementary space of $\mathcal{W}_{[\nu]}$
by $\mathcal{W}_{[\nu]^{-1}}$
and the relevant projector by $\mathsf{\Lambda}_{[\nu]^{-1}}$;
\be \mathcal{W}_{[\nu]^{-1}} := \bigoplus_{[\nu']\,(\ne[\nu])} \mathcal{W}_{[\nu']}\,,\quad
 \mathsf{\Lambda}_{[\nu]^{-1}} := \mathsf{1}-\mathsf{\Lambda}_{[\nu]}
 = \sum_{[\nu']\,(\ne[\nu])} \mathsf{\Lambda}_{[\nu']}\,.
\label{eq:proj-comple}\ee
Analogously to $\mathsf{S}_{[\nu]}$,
$\mathsf{S}_{[\nu]^{-1}}:=\mathsf{\Lambda}_{[\nu]^{-1}}^\dagger\,\mathsf{S}\,
\mathsf{\Lambda}_{[\nu]^{-1}}$ defines the RPA equation in $\mathcal{W}_{[\nu]^{-1}}$,
keeping the LR- and UL-dualities
and eliminating the solutions within $\mathcal{W}_{[\nu]}$,
yet without influencing the solutions in $\mathcal{W}_{[\nu]^{-1}}$.
Therefore, all the disclosed properties of the RPA solutions in $\mathcal{V}$
are transferred to the solutions of the RPA equation
within $\mathcal{W}_{[\nu]^{-1}}$.
It is noted that the determinant of $\mathsf{S}$ is decomposed as well,
\be \det\,\mathsf{S} = \prod_{[\nu]} (\det\,\mathsf{S}_{[\nu]})\,,
\label{eq:detS-decomp}\ee
where $(\det\,\mathsf{S}_{[\nu]})$ on the r.h.s.
is the $d_{[\nu]}$-dimensional determinant.
The l.h.s of the secular equation, $\det(\mathsf{S}-\omega\,\mathsf{N})$,
can be expressed in an analogous manner.

\section{Properties of each class of solutions}
\label{sec:stable&unstable}

Properties of the RPA solutions are further analyzed
for individual class of Prop.~\ref{prop:eigenvalues}.
It deserves commenting here on degeneracy.
To the author's best knowledge,
possibility of degeneracy, particularly of Jordan blocks,
has not been examined well, except several specific NG modes.
Although degeneracy occurs even in physical solutions
under presence of certain symmetry
(\textit{e.g.}, degeneracy with respect to magnetic quantum numbers
under the rotational symmetry),
it does not give rise to Jordan blocks.
This is obvious when the conservation law allows us
to separate the RPA equation into the equations
according to the quantum numbers.
However, it is not trivial whether the same holds
for a variety of extensive applications of the RPA.
For instance, energy levels are highly degenerate in continuum,
as in the continuum RPA~\cite{ref:SB75}.
Consideration of the degeneracy could be relevant to
how we can take the continuous limit from arguments on discrete levels.
For the NG mode, Thouless restricted himself
to the case of two-dimensional Jordan blocks.
While higher-dimensional blocks are not very likely to emerge
in physical situations,
it will be meaningful to distinguish physical situations
from facts with rigorous mathematical proof.

\subsection{Solutions for real eigenvalues}\label{subsec:real-sol}

Let us first consider Classes~\ref{item:positive} and \ref{item:negative}
of Prop.~\ref{prop:eigenvalues}.

\begin{proposition}\label{theor:stability}
If the stability matrix $\mathsf{S}$ is positive-definite,
the RPA equation is fully solvable.
If the RPA equation is fully solvable, $\mathsf{S}$ is positive-definite.
\end{proposition}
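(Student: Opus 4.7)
The plan is to handle both directions through the hermitian reformulation $\tilde{\mathsf{S}}=\mathsf{S}^{1/2}\mathsf{N}\,\mathsf{S}^{1/2}$ of Eq.~(\ref{eq:S-tilde}), together with the UL-pairing $\vect{x}_\nu\leftrightarrow\mathsf{\Sigma}_x\vect{x}_\nu^\ast$ from Corollary~\ref{lem:sym-sigma} and the bi-orthogonality of Lemma~\ref{lem:bi-orthogonality}. For the forward direction, assume $\mathsf{S}>0$. Then $\mathsf{S}^{1/2}$ is real, hermitian and invertible, so $\tilde{\mathsf{S}}$ is hermitian and admits a complete orthogonal eigenbasis of $2D$ vectors with real eigenvalues. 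Lemma~\ref{lem:tilde-S_eigen} carries this over to $\mathsf{N}\,\mathsf{S}$ via the invertible map $\vect{y}\mapsto\mathsf{N}\,\mathsf{S}^{1/2}\vect{y}$, yielding $2D$ linearly independent eigenvectors with real eigenvalues; this rules out Jordan blocks as well as Classes~\ref{item:pure-imag} and~\ref{item:quartet} of Prop.~\ref{prop:eigenvalues}, while a null eigenvalue is impossible because $\mathsf{S}\,\vect{x}=\vect{0}$ would contradict $\mathsf{S}>0$. Prop.~\ref{theor:sym-eigen} pairs the remaining eigenvalues as $\pm\omega_\nu$, giving exactly $D$ with $\omega_\nu>0$. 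For each such eigenvector, the identity $\vect{x}_\nu^\dagger\mathsf{S}\,\vect{x}_\nu=\omega_\nu\,\vect{x}_\nu^\dagger\mathsf{N}\,\vect{x}_\nu$ together with $\mathsf{S}>0$ forces $\vect{x}_\nu^\dagger\mathsf{N}\,\vect{x}_\nu>0$, i.e.\ normalizability. The first line of (\ref{eq:norm-b}) is then obtained by bi-orthogonality on nondegenerate $\omega_\nu$ plus a Gram--Schmidt step with respect to $\mathsf{N}$ on any degenerate eigenspace, where $\mathsf{N}$ is positive-definite by the same identity. The second line of (\ref{eq:norm-b}) reduces via $\vect{x}_\nu^T\mathsf{\Sigma}_x\mathsf{N}\,\vect{x}_{\nu'}=(\mathsf{\Sigma}_x\vect{x}_\nu^\ast)^\dagger\mathsf{N}\,\vect{x}_{\nu'}$ to Lemma~\ref{lem:bi-orthogonality} applied to the eigenvalues $-\omega_\nu$ and $\omega_{\nu'}$, which are strictly unequal.

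For the converse, suppose full solvability in the sense of Def.~\ref{def:solvable}. Corollary~\ref{lem:sym-sigma} furnishes $D$ additional eigenvectors $\mathsf{\Sigma}_x\vect{x}_\nu^\ast$ with eigenvalues $-\omega_\nu<0$. These two sets together span $\mathcal{V}$ and are linearly independent: any nonzero $\vect{u}\in\mathrm{span}\{\vect{x}_\nu\}$ satisfies $\vect{u}^\dagger\mathsf{N}\,\vect{u}=\sum_\nu|c_\nu|^2>0$ by expanding with (\ref{eq:norm-b}), while any nonzero $\vect{v}\in\mathrm{span}\{\mathsf{\Sigma}_x\vect{x}_\nu^\ast\}$ satisfies $\vect{v}^\dagger\mathsf{N}\,\vect{v}<0$ using the identity $(\mathsf{\Sigma}_x\vect{x}_\nu^\ast)^\dagger\mathsf{N}\,(\mathsf{\Sigma}_x\vect{x}_{\nu'}^\ast)=-\vect{x}_{\nu'}^\dagger\mathsf{N}\,\vect{x}_\nu$ displayed just before Def.~\ref{def:normalizable}, so the two spans meet only at $\vect{0}$. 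Equation~(\ref{eq:norm-relation}) and its UL-analogue then give diagonal elements $\vect{x}_\nu^\dagger\mathsf{S}\,\vect{x}_{\nu'}=\omega_\nu\delta_{\nu\nu'}$ and $(\mathsf{\Sigma}_x\vect{x}_\nu^\ast)^\dagger\mathsf{S}\,(\mathsf{\Sigma}_x\vect{x}_{\nu'}^\ast)=\omega_\nu\delta_{\nu\nu'}$, while Lemma~\ref{lem:bi-orthogonality} annihilates the cross terms because $\omega_\nu\ne -\omega_{\nu'}$. Hence the Gram matrix of $\mathsf{S}$ in this basis is diagonal with strictly positive entries $\omega_\nu$, so $\vect{z}^\dagger\mathsf{S}\,\vect{z}>0$ for every nonzero $\vect{z}\in\mathcal{V}$, which is $\mathsf{S}>0$.

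The principal delicate point is the bookkeeping for degenerate positive eigenvalues: in the forward direction one must argue separately that $\mathsf{N}$ is positive-definite on each $\omega_\nu$-eigenspace so that an $\mathsf{N}$-orthonormal basis can be produced, since bi-orthogonality alone does not settle the degenerate block. A minor conceptual point worth highlighting is that positive-definiteness of $\mathsf{S}$ is basis-independent as a statement about the quadratic form $\vect{z}^\dagger\mathsf{S}\,\vect{z}$, so reaching a diagonal-with-positive-entries representation in the (non-orthonormal) RPA basis genuinely suffices in the converse.
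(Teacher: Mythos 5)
Your proof is correct and follows essentially the same route as the paper: the forward direction rests on the hermitian reformulation $\tilde{\mathsf{S}}$ of Sec.~\ref{subsec:redefine} (your Gram--Schmidt step on degenerate eigenspaces, where $\mathsf{N}$ is positive-definite, is the element-wise version of the paper's matrix normalization $\tilde{\mathsf{X}}^\dagger\,\tilde{\mathsf{X}}=\mathsf{\Omega}$), and your converse, evaluating the quadratic form of $\mathsf{S}$ on the basis $\{\vect{x}_\nu,\mathsf{\Sigma}_x\vect{x}_\nu^\ast\}$, is the coordinate form of the paper's factorization $\mathsf{S}=(\mathsf{N\,X\,N})\,\mathsf{\Omega}\,(\mathsf{N\,X\,N})^\dagger$. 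The only slip is immaterial: $\mathsf{S}^{1/2}$ need not be real, merely hermitian and invertible, which is all your argument uses.
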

\noindent
Although the first part of this proposition was already proven
in Ref.~\cite{ref:Thou61},
I prove it again in combination with the second part.
\begin{proof}
Suppose that $\mathsf{S}$ is positive-definite.
Then $\tilde{\mathsf{S}}$ in Sec.~\ref{subsec:redefine}
is hermitian and therefore diagonalizable
by a certain matrix $\tilde{\mathsf{X}}$,
$\tilde{\mathsf{S}}\,\tilde{\mathsf{X}}
=\tilde{\mathsf{X}}\,\tilde{\mathsf{\Omega}}$,
where $\tilde{\mathsf{\Omega}}$ is a diagonal matrix.
$\tilde{\mathsf{X}}$ can be unitary, but we shall take another normalization.
The eigenvalues in $\tilde{\mathsf{\Omega}}$ are all real and non-zero,
since $\det\tilde{\mathsf{S}}=\det(\mathsf{N\,S})\ne 0$.
Proposition~\ref{theor:sym-eigen} tells us
that they are pairwise, $\pm\omega_\nu$ ($\nu=1,\cdots,D$).
We here take ${\displaystyle\tilde{\mathsf{\Omega}}:=\begin{pmatrix}
 \mathrm{diag.}(\omega_\nu)&0\\ 0&-\mathrm{diag.}(\omega_\nu) \end{pmatrix}}$
so that $\omega_\nu>0$,
and define $\mathsf{\Omega}:=\mathsf{N}\,\tilde{\mathsf{\Omega}}$,
which is diagonal and positive-definite.
Let us adopt the normalization of $\tilde{\mathsf{X}}$
as $\tilde{\mathsf{X}}^\dagger\,\tilde{\mathsf{X}}=\mathsf{\Omega}$
(\textit{i.e.}, $\tilde{\mathsf{X}}\,\mathsf{\Omega}^{-1/2}$ is unitary)
and define $\mathsf{X}:=\mathsf{S}^{-1/2}\,\tilde{\mathsf{X}}$.
This derives
\be \mathsf{S\,X}=\mathsf{N\,X\,N\,\Omega}\,,\quad
 \mathsf{X}^\dagger\,\mathsf{N\,X}=\mathsf{N}\,, \label{eq:RPAeq-c}\ee
proving that the RPA equation is fully solvable because,
if we write $\mathsf{X}=(\vect{x}_1,\cdots,\vect{x}_D,
\mathsf{\Sigma}_x\vect{x}_1^\ast,\cdots,\mathsf{\Sigma}_x\vect{x}_D^\ast)$,
Eq.~(\ref{eq:RPAeq-c}) yields Eqs.~(\ref{eq:RPAeq-b}) and (\ref{eq:norm-b})
for $\nu=1,\cdots,D$.\\
This part of the proposition is proven also from Eq.~(\ref{eq:norm-relation})
with $\nu=\nu'$ and Corollary~\ref{cor:non-Jordan}.
If $\mathsf{S}$ is positive-definite,
$\vect{x}_\nu^\dagger\,\mathsf{S}\,\vect{x}_\nu>0$,
deriving $\omega_\nu>0$ and $\vect{x}_\nu^\dagger\,\mathsf{N}\,\vect{x}_\nu>0$
or both negative, for any solution of the RPA equation (\ref{eq:RPAeq-b}).
Then Corollary~\ref{cor:non-Jordan} ensures
that no eigenvector constitutes Jordan blocks.\\
Conversely, if the RPA equation is fully solvable as in Eq.~(\ref{eq:RPAeq-c}),
$\det\mathsf{X}\ne 0$ and $\omega_\nu>0$ for $^\forall\nu\,(=1,\cdots,D)$.
Then, since
\be \mathsf{S}=\mathsf{U}^{-1}\,\mathrm{diag.}(\lambda_i)\,\mathsf{U}
 =(\mathsf{N\,X\,N})\,\mathsf{\Omega}\,(\mathsf{N\,X\,N})^\dagger\,,\ee
it follows that
\be \mathrm{diag.}(\lambda_i)
 =(\mathsf{U\,N\,X\,N})\,\mathsf{\Omega}\,(\mathsf{U\,N\,X\,N})^\dagger
\ee
and therefore, by expressing $\mathsf{U\,N\,X\,N}=(\chi_{i\nu})$,
\be \lambda_i =\sum_\nu \omega_\nu\,|\chi_{i\nu}|^2 >0\,;
\label{eq:pos-def-reverse}\ee
namely $\mathsf{S}$ is positive-definite.
Notice $\big|\det(\mathsf{U\,N\,X\,N})\big|=\big|\det\,\mathsf{X}\big|\ne 0$,
which excludes the possibility of $\lambda_i=0$
in Eq.~(\ref{eq:pos-def-reverse}).
\end{proof}

\noindent
Therefore the arguments on physical solutions by Thouless are applicable
even under the presence of degeneracy as in the continuum.

For solutions of Class~\ref{item:negative},
a positive eigenvalue $\omega_\nu$ is accompanied by
an eigenvector $\vect{x}_\nu$ with $\vect{x}_\nu^\dagger\mathsf{N}\,\vect{x}_\nu<0$
or $\vect{x}_\nu^\dagger\mathsf{N}\,\vect{x}_\nu=0$.
In the former case, its UL-dual partner is normalizable
but corresponds to the eigenvalue $-\omega_\nu\,(<0)$.
The submatrix $\mathsf{S}_{[\nu]}$ of this solution is negative-definite,
as exemplified in Appendix~\ref{app:2*2}.
Therefore the stability matrix $\mathsf{S}$
has two negative eigenvalues at least.
In the $\vect{x}_\nu^\dagger\mathsf{N}\,\vect{x}_\nu=0$ case,
$\vect{x}_\nu$ forms a Jordan block~\cite{ref:Neergard},
whose UL-dual partner associated with $-\omega_\nu\,(<0)$
belongs to another Jordan block.
Probably for this reason, Thouless ignored this class of solutions,
having focused on his arguments near the stability.

\subsection{Solutions for complex eigenvalues}\label{subsec:imag-sol}

Complex eigenvalues belong to Classes~\ref{item:pure-imag}
and \ref{item:quartet} of Prop.~\ref{prop:eigenvalues}.
I next discuss properties of solutions of Class~\ref{item:pure-imag}.

\begin{lemma}\label{lem:pure-imag-eigen}
For an eigenvalue $\omega_\nu$ of $\mathsf{N\,S}$
with $\mathrm{Re}(\omega_\nu)=0$,
any corresponding eigenvector can be taken so as to satisfy
$\mathsf{\Sigma}_x\vect{x}_\nu^\ast=e^{-i\phi}\,\vect{x}_\nu$ ($\phi\in\mathbf{R}$).
Conversely, if an eigenvector $\vect{x}_\nu$ satisfies
$\mathsf{\Sigma}_x\vect{x}_\nu^\ast=e^{-i\phi}\,\vect{x}_\nu$,
$\mathrm{Re}(\omega_\nu)=0$ holds for its corresponding eigenvalue.
\end{lemma}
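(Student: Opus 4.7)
The plan is to build on Corollary~\ref{lem:sym-sigma} (the UL-duality), which tells us that if $(\omega_\nu,\vect{x}_\nu)$ is an eigensolution, then $(-\omega_\nu^\ast,\mathsf{\Sigma}_x\vect{x}_\nu^\ast)$ is one as well. The hypothesis $\mathrm{Re}(\omega_\nu)=0$ is equivalent to $-\omega_\nu^\ast=\omega_\nu$, which means that the antilinear operator $T\vect{x}:=\mathsf{\Sigma}_x\vect{x}^\ast$ maps the eigenspace of $\omega_\nu$ (under $\mathsf{N\,S}$) to itself. Since $T^2=\mathsf{1}$ (because $\mathsf{\Sigma}_x^2=\mathsf{1}$ and $T$ involves two complex conjugations), $T$ restricted to this eigenspace is an antilinear involution. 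That is the structural observation the whole proof hangs on.

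For the forward direction, I would first dispose of the non-degenerate case: if the eigenspace is one dimensional, then $T\vect{x}_\nu=\alpha\,\vect{x}_\nu$ for some $\alpha\in\mathbf{C}$, and applying $T$ a second time yields $\vect{x}_\nu=T(\alpha\vect{x}_\nu)=\alpha^\ast T\vect{x}_\nu=|\alpha|^2\vect{x}_\nu$, so $|\alpha|=1$ and we can write $\alpha=e^{-i\phi}$ with $\phi\in\mathbf{R}$. Rescaling $\vect{x}_\nu\to e^{i\theta}\vect{x}_\nu$ sends $\phi\to\phi+2\theta$, so every $\phi$ is achievable. For the degenerate case, I would invoke the standard fact that an antilinear involution on a finite-dimensional complex vector space admits a basis of fixed vectors: given any $\vect{y}$ in the eigenspace, at least one of $\vect{y}+T\vect{y}$ or $i(\vect{y}-T\vect{y})$ is nonzero and fixed by $T$, and one builds the basis inductively on the fixed subspace and its $T$-invariant complement. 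Each such basis vector $\vect{x}_\nu$ then satisfies $T\vect{x}_\nu=\vect{x}_\nu$, i.e.\ the claim with $\phi=0$; general $\phi$ follows again by rephasing.

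For the converse, assume $\mathsf{\Sigma}_x\vect{x}_\nu^\ast=e^{-i\phi}\vect{x}_\nu$ for some real $\phi$, with $\vect{x}_\nu\ne\vect{0}$ an eigenvector of $\mathsf{N\,S}$ for eigenvalue $\omega_\nu$. Applying Corollary~\ref{lem:sym-sigma}, $\mathsf{\Sigma}_x\vect{x}_\nu^\ast$ is an eigenvector for $-\omega_\nu^\ast$; but this vector is just $e^{-i\phi}\vect{x}_\nu$, which is still an eigenvector for $\omega_\nu$. Hence $\omega_\nu=-\omega_\nu^\ast$, i.e.\ $\mathrm{Re}(\omega_\nu)=0$, completing the proof.

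The main obstacle is the degenerate case of the forward direction: one must argue carefully that an antilinear involution on the eigenspace admits a fixed-vector basis, rather than simply rescaling a single eigenvector by a phase (which is insufficient when $\vect{x}_\nu$ and $T\vect{x}_\nu$ happen to be linearly independent). Everything else reduces to direct application of earlier results in the section, particularly the UL-duality of Corollary~\ref{lem:sym-sigma} together with the involution property $T^2=\mathsf{1}$.
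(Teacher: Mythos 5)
Your proof is correct and follows essentially the same route as the paper: the forward direction rests on Corollary~\ref{lem:sym-sigma} placing $\vect{x}_\nu$ and $\mathsf{\Sigma}_x\vect{x}_\nu^\ast$ in the same eigenspace when $\omega_\nu=-\omega_\nu^\ast$, and your fixed combinations $\vect{y}+T\vect{y}$ and $i(\vect{y}-T\vect{y})$ are exactly the paper's $\alpha\,\vect{x}_\nu+e^{i\phi}\alpha^\ast\,\mathsf{\Sigma}_x\vect{x}_\nu^\ast$ (with $\phi=0$), while the converse is the same linear-independence argument in contrapositive form. Your antilinear-involution framing and the explicit inductive construction of a fixed-vector basis in higher-dimensional eigenspaces is a slightly more systematic packaging of what the paper does for the two-dimensional span, but it is not a different method.
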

\begin{proof}
$\mathrm{Re}(\omega_\nu)=0$ is equivalent to $\omega_\nu=-\omega_\nu^\ast$.
Therefore, from Corollary~\ref{lem:sym-sigma},
both $\vect{x}_\nu$ and $\mathsf{\Sigma}_x\vect{x}_\nu^\ast$
belong to the equal eigenvalue $\omega_\nu$,
whether they are linearly dependent or independent.
Then a linear combination of them,
$\vect{y}_\nu:=\alpha\,\vect{x}_\nu + \beta\,\mathsf{\Sigma}_x\vect{x}_\nu^\ast$
($\alpha,\beta\in\mathbf{C}$),
is also an eigenvector associated with the eigenvalue $\omega_\nu$.
Assuming $\beta=e^{i\phi}\,\alpha^\ast$,
we verify $\mathsf{\Sigma}_x\vect{y}_\nu^\ast=e^{-i\phi}\,\vect{y}_\nu$.
When $\vect{x}_\nu$ and $\mathsf{\Sigma}_x\vect{x}_\nu^\ast$
are linearly independent,
we obtain two independent vectors
by adopting , \textit{e.g.}, $\alpha=-\beta=1$ and $\alpha=\beta=i$.\\
If $\omega_\nu\ne-\omega_\nu^\ast$,
the associating eigenvectors
$\vect{x}_\nu$ and $\mathsf{\Sigma}_x\vect{x}_\nu^\ast$
must be linearly independent;
namely $\mathsf{\Sigma}_x\vect{x}_\nu^\ast=e^{-i\phi}\,\vect{x}_\nu$ is impossible.
\end{proof}
\noindent
It is noted here that $\mathrm{Re}(\omega_\nu)=0$ covers
the solutions of Class~\ref{item:null} as well as \ref{item:pure-imag}
in Prop.~\ref{prop:eigenvalues}.

If $\mathsf{\Sigma}_x\vect{x}_\nu^\ast=e^{-i\phi}\,\vect{x}_\nu$ is assumed
for the $\mathrm{Re}(\omega_\nu)=0$ case,
the lower $D$-dimensional components of the RPA equation (\ref{eq:RPAeq-b})
is only a repetition of the upper $D$-dimensional components.
We can choose the phase $e^{i\phi}$ arbitrarily,
because it is controllable via a transformation
$\vect{y}_\nu=e^{i\theta}\vect{x}_\nu$.
A convenient choice is $e^{i\phi}=-1$, so that ${\displaystyle\vect{x}_\nu
= -\mathsf{\Sigma}_x\vect{x}_\nu^\ast
= \begin{pmatrix} X^{(\nu)}\\ -X^{(\nu)\ast}\end{pmatrix}}$.

Let us now take $c_k^{(\nu)}\in\mathbf{R}$ in Eq.~(\ref{eq:Jordan}).
\begin{lemma}\label{lem:pure-imag-Jordan}
For an eigenvalue $\omega_\nu$ of $\mathsf{N\,S}$
with $\mathrm{Re}(\omega_\nu)=0$,
all corresponding basis vectors can satisfy
$\mathsf{\Sigma}_x\vect{\xi}_k^{(\nu)\ast}=-\vect{\xi}_k^{(\nu)}$;
\textit{i.e.}, ${\displaystyle\vect{\xi}_k^{(\nu)}
 = \begin{pmatrix} \Xi^{(\nu,k)}\\ -\Xi^{(\nu,k)\ast}\end{pmatrix}}$.
\end{lemma}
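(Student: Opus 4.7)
\emph{Proof proposal.} The plan is to prove the statement by induction on $k$. The base case $k=1$ is Lemma~\ref{lem:pure-imag-eigen} applied to $\vect{\xi}_1^{(\nu)}=\vect{x}_\nu$ with the free phase fixed to $e^{i\phi}=-1$, giving $\mathsf{\Sigma}_x\vect{\xi}_1^{(\nu)\ast}=-\vect{\xi}_1^{(\nu)}$. For the inductive step, assuming the identity holds at level $k$, I would take the complex conjugate of Eq.~(\ref{eq:Jordan}) for $\vect{\xi}_{k+1}^{(\nu)}$, then multiply by $\mathsf{\Sigma}_x$ from the left. Combining $\mathsf{\Sigma}_x\,\mathsf{S}^\ast\,\mathsf{\Sigma}_x=\mathsf{S}$, the anticommutation $\mathsf{\Sigma}_x\,\mathsf{N}=-\mathsf{N}\,\mathsf{\Sigma}_x$, the convention $c_k^{(\nu)}\in\mathbf{R}$ stated just before the lemma, the hypothesis $\omega_\nu^\ast=-\omega_\nu$, and the inductive assumption on $\vect{\xi}_k^{(\nu)}$, the resulting equation shows that $-\mathsf{\Sigma}_x\vect{\xi}_{k+1}^{(\nu)\ast}$ satisfies the identical Jordan-chain relation as $\vect{\xi}_{k+1}^{(\nu)}$. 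Subtracting, their sum $\vect{\zeta}:=\vect{\xi}_{k+1}^{(\nu)}+\mathsf{\Sigma}_x\vect{\xi}_{k+1}^{(\nu)\ast}$ lies in $\ker(\mathsf{S}-\omega_\nu\mathsf{N})$, i.e.\ it is an eigenvector of $\mathsf{N\,S}$ at $\omega_\nu$ (possibly zero).

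If $\vect{\zeta}=\vect{0}$ we are done at this level. Otherwise, since the subscript $\nu$ is reserved for a single Jordan block, the piece of $\ker(\mathsf{S}-\omega_\nu\mathsf{N})$ that one needs is spanned by $\vect{\xi}_1^{(\nu)}$, so $\vect{\zeta}=\gamma\,\vect{\xi}_1^{(\nu)}$ for some $\gamma\in\mathbf{C}$. A direct use of $\mathsf{\Sigma}_x^2=\mathsf{1}$ on the defining expression gives $\mathsf{\Sigma}_x\vect{\zeta}^\ast=+\vect{\zeta}$, whereas the base-case UL-duality of $\vect{\xi}_1^{(\nu)}$ gives $\mathsf{\Sigma}_x(\gamma\vect{\xi}_1^{(\nu)})^\ast=-\gamma^\ast\vect{\xi}_1^{(\nu)}$; equating these forces $\gamma=-\gamma^\ast$, so $\gamma$ is purely imaginary. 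I would then redefine
\be \vect{\xi}_{k+1}^{(\nu)\prime} := \vect{\xi}_{k+1}^{(\nu)} - \tfrac{1}{2}\gamma\,\vect{\xi}_1^{(\nu)}\,. \nonumber\ee
Since $\vect{\xi}_1^{(\nu)}\in\ker(\mathsf{S}-\omega_\nu\mathsf{N})$, this substitution preserves Eq.~(\ref{eq:Jordan}); a short line using $\gamma^\ast=-\gamma$ gives $\vect{\xi}_{k+1}^{(\nu)\prime}+\mathsf{\Sigma}_x\vect{\xi}_{k+1}^{(\nu)\prime\ast}=\mathrm{Re}(\gamma)\,\vect{\xi}_1^{(\nu)}=\vect{0}$, i.e.\ $\mathsf{\Sigma}_x\vect{\xi}_{k+1}^{(\nu)\prime\ast}=-\vect{\xi}_{k+1}^{(\nu)\prime}$, which closes the induction.

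The step I expect to be most delicate is the matching of symmetry types: one must verify that the single available gauge freedom (a complex multiple of $\vect{\xi}_1^{(\nu)}$) has exactly the right UL-transformation character to absorb the obstruction $\vect{\zeta}$. This is where the reality convention $c_k^{(\nu)}\in\mathbf{R}$ and the assumption $\mathrm{Re}(\omega_\nu)=0$ are jointly essential; without the former the conjugated chain would carry $c_k^{(\nu)\ast}\ne c_k^{(\nu)}$, and without the latter it would carry $\omega_\nu^\ast\ne-\omega_\nu$, in either case destroying the clean identity that $-\mathsf{\Sigma}_x\vect{\xi}_{k+1}^{(\nu)\ast}$ solves the same chain as $\vect{\xi}_{k+1}^{(\nu)}$. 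If $\omega_\nu$ admits several Jordan blocks, one first invokes Lemma~\ref{lem:pure-imag-eigen} on each seed eigenvector separately, then repeats the construction block by block; the resulting $\vect{\zeta}$ may span the full eigenspace, but the same reality argument still forces every expansion coefficient to be pure imaginary and hence cancellable by the block's own seed vector.
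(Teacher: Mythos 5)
Your proof is correct and follows essentially the same route as the paper: induction on $k$, with the base case supplied by Lemma~\ref{lem:pure-imag-eigen} and the inductive step resting on the conjugated chain relation (the paper's Eq.~(\ref{eq:Jordan-sym2})), which under $\omega_\nu^\ast=-\omega_\nu$ and $c_k^{(\nu)}\in\mathbf{R}$ shows that $-\mathsf{\Sigma}_x\vect{\xi}_{k+1}^{(\nu)\ast}$ solves the same inhomogeneous equation as $\vect{\xi}_{k+1}^{(\nu)}$. Your explicit absorption of the kernel obstruction $\vect{\zeta}$ by redefining $\vect{\xi}_{k+1}^{(\nu)}$ (equivalently, replacing it by $\tfrac{1}{2}\big[\vect{\xi}_{k+1}^{(\nu)}-\mathsf{\Sigma}_x\vect{\xi}_{k+1}^{(\nu)\ast}\big]$) merely spells out what the paper asserts in one line, so it is a careful elaboration rather than a different method.
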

\noindent
This lemma states that solutions of Classes~\ref{item:pure-imag}
and \ref{item:null} can be self UL-dual,
together with basis vectors generated from them.
\begin{proof}
In the case that $\mathrm{Re}(\omega_\nu)=0$,
Eqs.~(\ref{eq:Jordan}) and (\ref{eq:Jordan-sym}) come
\be\begin{split}
 (\mathsf{S}-\omega_\nu\,\mathsf{N})\,\vect{\xi}_{k+1}^{(\nu)}
 &= ic_k^{(\nu)}\,\mathsf{N}\,\vect{\xi}_k^{(\nu)}\,,\\
 (\mathsf{S}-\omega_\nu\,\mathsf{N})\,\mathsf{\Sigma}_x\vect{\xi}_{k+1}^{(\nu)\ast}
 &= ic_k^{(\nu)}\,\mathsf{N}\,\mathsf{\Sigma}_x\vect{\xi}_k^{(\nu)\ast}\,.
\end{split}\label{eq:Jordan-sym2}\ee
From Lemma~\ref{lem:pure-imag-eigen} and the argument above,
we can assume $\mathsf{\Sigma}_x\vect{\xi}_1^{(\nu)\ast}=-\vect{\xi}_1^{(\nu)}$.
If $\mathsf{\Sigma}_x\vect{\xi}_k^{(\nu)\ast}=-\vect{\xi}_k^{(\nu)}$,
the first equation of (\ref{eq:Jordan-sym2}) indicates
that the second equation has a solution fulfilling
$\mathsf{\Sigma}_x\vect{\xi}_{k+1}^{(\nu)\ast}=-\vect{\xi}_{k+1}^{(\nu)}$.
The lemma is then proven inductively.
\end{proof}
\noindent
Compatibility of this lemma with Prop.~\ref{theor:one-to-one_basis}
is confirmed in Appendix~\ref{app:sub-B1}.
Under the above convention for the $\mathrm{Re}(\omega_\nu)=0$ case,
the normalization condition of $\vect{\xi}_{k}^{(\nu)}$ can be
\be \bbar{\vect{\xi}}_k^{(\nu)\dagger}\,\mathsf{N}\,\vect{\xi}_{k'}^{(\nu')}
 = \bar{\Xi}^{(\nu,k)\dagger}\,\Xi^{(\nu',k')}
 - \big[\bar{\Xi}^{(\nu,k)\dagger}\,\Xi^{(\nu',k')}\big]^\ast
 = \pm i\delta_{\nu\nu'}\,\delta_{kk'}\,,
\label{eq:imag-norm}\ee
although $\vect{\xi}_{k}^{(\nu)}$ is not normalizable
in the respect of Def.~\ref{def:normalizable}.

Consider solutions in vicinity of the stability,
in which the stability matrix $\mathsf{S}$ has a single negative eigenvalue.
Near the stability,
the subspace providing negative $\det\,\mathsf{S}_{[\nu]}$ can be separated out
by using the projector in Sec.~\ref{sec:decomp},
which should be two dimensional and therefore provides
${\displaystyle\mathsf{S}_{[\nu]}=\begin{pmatrix} a& b\\
b^\ast& a\end{pmatrix}}$ ($a\in\mathbf{R}$, $b\in\mathbf{C}$),
as in Appendix~\ref{app:2*2}.
A pair of pure-imaginary eigenvalues is obtained,
illustrating that the first unphysical solution
emerges as Class~\ref{item:pure-imag} of Prop.~\ref{prop:eigenvalues}.

Unlike the self LR-duality for real eigenvalues,
the self UL-duality does not forbid Jordan blocks,
although most pure-imaginary eigenvalues are expected
not to form Jordan blocks.
An example of Jordan blocks is presented in Appendix~\ref{app:4*4-pure-imag}.

Let us turn to solutions of Class~\ref{item:quartet}.
Quartet solutions are a manifestation of the two types of dualities.
The possibility of quartet solutions
was first pointed out in Ref.~\cite{ref:UR71}
for $\mathsf{S}=\mathsf{S}^\ast$ cases,
and mentioned in Ref.~\cite{ref:SM84} in more general context.
A minimal model for quartet solutions is constructed by taking $D=2$,
and is analyzed in Appendix~\ref{app:4*4-quartet}.
For quartet solutions $\nu$, $d_{[\nu]}$ is a multiple of four.
Hence, by denoting the solutions $\pm\alpha\pm i\beta$
($\alpha,\beta\in\mathbf{R}$),
$\det\,\mathsf{S}_{[\nu]}=\det(\mathsf{N}_{[\nu]}\,\mathsf{S}_{[\nu]})
=(\alpha^2+\beta^2)^{d_{[\nu]}/2}>0$.
As $\mathsf{S}$ cannot be positive-definite
on account of the latter part of Prop.~\ref{theor:stability},
Eq.~(\ref{eq:detS-decomp}) indicates
that $\mathsf{S}$ has at least two negative eigenvalues
for quartet solutions to come out.

\subsection{NG-mode solutions}\label{subsec:NG-sol}

The simplest example of the NG-mode solution
is given in Appendix~\ref{app:2*2}, by the $2\times 2$ stability matrix.
It illustrates that the null eigenvalue
is often associated with a two-dimensional Jordan block,
as indicated by Thouless~\cite{ref:Thou61}.
The NG modes that generate two-dimensional Jordan blocks
have well been investigated~\cite{ref:RS80,ref:Thou61,ref:TV62}.
However, in the example of Appendix~\ref{app:2*2},
there is a trivial case of $\mathsf{S}=\mathsf{0}$
in which two $d_\nu=1$ eigenvectors are present for the null eigenvalues.
Moreover, an example of 4-dimensional Jordan block
is seen in Appendix~\ref{app:4*4-null}.
Likely or not, it is difficult to exclude the possibilities
other than the two-dimensional Jordan block for the null eigenvalue
only from mathematical viewpoints.

\begin{corollary}\label{cor:deg-null}
If there exists a null eigenvalue for $\mathsf{N\,S}$,
it must have even number of degeneracy.
\end{corollary}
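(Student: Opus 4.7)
The plan is to read the corollary off Lemma~\ref{lem:dim-Lambda} by applying it to every index class $[\nu]$ containing the null eigenvalue. First I would recall from Sec.~\ref{sec:decomp} that the whole space splits as $\mathcal{V} = \bigoplus_{[\nu]} \mathcal{W}_{[\nu]}$, where each $\mathcal{W}_{[\nu]}$ collects all Jordan-chain basis vectors of $\mathsf{N\,S}$ built on eigenvectors whose eigenvalues fill out the UL-/LR-related quartet $\{\omega_\nu,-\omega_\nu,\omega_\nu^\ast,-\omega_\nu^\ast\}$. For $\omega_\nu=0$ this quartet collapses to the single value $0$, so every vector in such a $\mathcal{W}_{[\nu]}$ is a generalized eigenvector of $\mathsf{N\,S}$ for eigenvalue $0$, and conversely every generalized null eigenvector sits inside some $\mathcal{W}_{[\nu]}$ with $\omega_\nu=0$.

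Next I would express the total degeneracy of the null eigenvalue---interpreted, as throughout the paper (cf.\ Prop.~\ref{theor:sym-eigen}), as the multiplicity of $\omega=0$ as a root of the secular polynomial, equivalently the dimension of the generalized null eigenspace---as the sum $\sum_{[\nu]:\,\omega_\nu=0} d_{[\nu]}$ with $d_{[\nu]} = \dim\mathcal{W}_{[\nu]}$. Lemma~\ref{lem:dim-Lambda} states that every $d_{[\nu]}$ is even; in particular each summand above is even, and so the whole sum is even, which is precisely the claim.

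The main obstacle has already been overcome inside Lemma~\ref{lem:dim-Lambda}, whose delicate case is the fully self-dual situation $\mathcal{W}_\nu=\bbar{\mathcal{W}}_\nu=\mathsf{\Sigma}_x\mathcal{W}_\nu^\ast =\mathsf{\Sigma}_x\bbar{\mathcal{W}}_\nu^\ast$ with $\omega_\nu=0$, where the parity of $d_\nu$ must be forced from the normalization incompatibility at the middle of an odd-dimensional Jordan block captured by Eq.~(\ref{eq:imag-norm}); in all other configurations evenness comes straight from pairing by Lemma~\ref{lem:Jordan-sym}. With that lemma in hand, the present corollary is essentially immediate, and the only remaining care is to account for the possibility of several distinct classes $[\nu]$ all having $\omega_\nu=0$ (for example, a two-dimensional Jordan block coexisting with an unrelated normalizable null eigenvector, as illustrated in the later appendices). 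Since each such class contributes an even number, additivity preserves evenness and the corollary follows.
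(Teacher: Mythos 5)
Your argument is correct, but it is not the paper's primary proof: the paper's main route is a global dimension count that never touches the decomposition of Sec.~4. There, one observes from Prop.~\ref{theor:sym-eigen} that the non-zero eigenvalues pair off as $\pm\omega$ with equal multiplicities as roots of the secular polynomial, and from Lemma~\ref{lem:Jordan-sym} that the total dimension occupied by their Jordan blocks is even; since the full space has dimension $2D$, the null eigenvalue's degeneracy is $2D$ minus an even number, hence even. Your route---summing $d_{[\nu]}$ over the classes with $\omega_\nu=0$ and invoking Lemma~\ref{lem:dim-Lambda} for the evenness of each summand---is acknowledged by the paper only in the closing remark ``Also proven from Lemma~\ref{lem:dim-Lambda},'' and you have filled in that one-liner soundly, including the correct identification of the generalized null eigenspace with $\bigoplus_{[\nu]:\,\omega_\nu=0}\mathcal{W}_{[\nu]}$ and the observation that several such classes may coexist. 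The trade-off is that your proof leans on much heavier machinery: Lemma~\ref{lem:dim-Lambda}'s delicate fully self-dual case is settled only via Eq.~(\ref{eq:imag-norm}) of Lemma~\ref{lem:pure-imag-Jordan}, which appears later in the paper, whereas the complement-counting argument needs nothing beyond the symmetry of the secular polynomial and the pairing of Jordan blocks. On the other hand, your version localizes the evenness block by block, which is slightly more informative than the aggregate parity statement. Both are valid; if you want the self-contained, minimal-prerequisite proof, the paper's counting argument is the one to prefer.
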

\begin{proof}
Because of Prop.~\ref{theor:sym-eigen},
the number of non-zero eigenvalues must be even, up to their degeneracies.
Moreover, Lemma~\ref{lem:Jordan-sym} ensures
that sum of dimensions of Jordan blocks for non-zero eigenvalues is even.
The total dimension of $\mathsf{N\,S}$ is $2D$,
which concludes the degeneracy of the null eigenvalue must be even.
Also proven from Lemma~\ref{lem:dim-Lambda}.
\end{proof}

When SSB occurs, there must be NG-mode solutions
corresponding to the broken symmetry;
\textit{e.g.}, the linear momentum in the SSB with respect to the translation
and the angular momentum in the SSB with respect to the rotation
in deformed nuclei.
For specific NG-mode solutions with such physical interpretations,
their properties can further be explored,
though I do not pursue this direction in this article.

The null eigenvalues may lie at the intersection
of the self LR- and the self UL-dualities.
Although there is no single eigenvector having both of the self dualities
as indicated by Lemma~\ref{lem:dim-Lambda},
there could be an even-dimensional Jordan block
in which the LR-duality closes by its basis vectors,
while keeping the self UL-duality of Lemma~\ref{lem:pure-imag-Jordan}.
In such cases the Jordan block, instead of the basis vectors,
may be said self LR-dual.
\begin{proposition}\label{theor:null-self-dual}
For even-dimensional Jordan blocks associated with a null eigenvalue,
it is possible to produce basis vectors $\big\{\vect{\eta}_k;
k=1,\cdots,d_\nu\big\}$ having double self duality,
$\mathsf{\Sigma}_x\vect{\eta}_k^\ast=-\vect{\eta}_k$ and
$\bbar{\vect{\eta}}_k=\vect{\eta}_{d_\nu+1-k}$.
\end{proposition}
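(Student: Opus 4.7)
My plan is to start from the self UL-dual Jordan chain provided by Lemma~\ref{lem:pure-imag-Jordan}, analyse the pairing $\mathsf{N}$ restricted to it, and then perform basis changes compatible with both the Jordan structure and the self UL-duality to bring this pairing to anti-diagonal normal form; the LR-closure $\bbar{\vect{\eta}}_k=\vect{\eta}_{d_\nu+1-k}$ will then follow directly from the defining property of the LR-dual.

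Since $\omega_\nu=0$ satisfies $\mathrm{Re}(\omega_\nu)=0$, Lemma~\ref{lem:pure-imag-Jordan} supplies a chain $\{\vect{\xi}_k^{(\nu)}\}$ with $\mathsf{\Sigma}_x\vect{\xi}_k^{(\nu)\ast}=-\vect{\xi}_k^{(\nu)}$ and $c_k^{(\nu)}\in\mathbf{R}$. Introduce $G_{k,k'}:=\vect{\xi}_k^{(\nu)\dagger}\,\mathsf{N}\,\vect{\xi}_{k'}^{(\nu)}$. Using $\mathsf{\Sigma}_x\,\mathsf{N}\,\mathsf{\Sigma}_x=-\mathsf{N}$, one obtains $G_{k,k'}=-G_{k,k'}^\ast$, which combined with Hermiticity of $\mathsf{N}$ makes $G$ pure imaginary and antisymmetric. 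Hermiticity of $\mathsf{S}$ together with the Jordan equation~(\ref{eq:Jordan}) gives the recursion $c_{k-1}^{(\nu)}G_{k-1,k'}=-c_{k'-1}^{(\nu)}G_{k,k'-1}$; combined with $\mathsf{S}\vect{\xi}_1^{(\nu)}=0$ and iteration this forces $G_{k,k'}=0$ for $k+k'<d_\nu+1$ and makes every principal-antidiagonal entry $G_{k,d_\nu+1-k}$ proportional to $G_{1,d_\nu}$ by a non-zero real factor. Antisymmetry $G_{d_\nu,1}=-G_{1,d_\nu}$ together with this proportionality requires $d_\nu$ to be even, which is the hypothesis, and the assumption that the block is self LR-dual supplies $G_{1,d_\nu}\ne 0$.

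Next, I would eliminate the super-antidiagonal entries ($k+k'>d_\nu+1$) by iterated substitutions $\vect{\xi}_k^{(\nu)}\mapsto\vect{\xi}_k^{(\nu)}+\sum_{j=1}^{k-1}a_j\,\vect{\xi}_{k-j}^{(\nu)}$ with real coefficients $a_j$, processed in order of decreasing $k+k'$. Such substitutions preserve both the Jordan chain structure (the $c_k^{(\nu)}$ are unchanged) and the self UL-duality (since the $a_j$ are real and $\mathsf{\Sigma}_x\vect{\xi}_{k-j}^{(\nu)\ast}=-\vect{\xi}_{k-j}^{(\nu)}$), and at each stage the still non-zero principal-antidiagonal entries furnish the degree of freedom needed to cancel the current offending entry without disturbing any antidiagonal already processed. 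A final real rescaling $\vect{\eta}_k:=\alpha_k\vect{\xi}_k^{(\nu)}$ with $\alpha_k\alpha_{d_\nu+1-k}\,|G_{k,d_\nu+1-k}|=1$ then normalises the anti-diagonal, producing $\vect{\eta}_k^\dagger\,\mathsf{N}\,\vect{\eta}_{k'}=\pm i\,\delta_{k+k',d_\nu+1}$ in accord with Eq.~(\ref{eq:imag-norm}), while leaving the chain and the self UL-duality intact.

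Since the LR-dual is characterised by $\bbar{\vect{\eta}}_k^\dagger\,\mathsf{N}\,\vect{\eta}_{k'}=\pm i\,\delta_{kk'}$ and the vector $\vect{\eta}_{d_\nu+1-k}$ now satisfies exactly this overlap pattern with every $\vect{\eta}_{k'}$, the one-to-one correspondence established in Prop.~\ref{theor:one-to-one_basis} forces $\bbar{\vect{\eta}}_k=\vect{\eta}_{d_\nu+1-k}$. The main obstacle is the iterative cancellation in the third paragraph: one must verify that the real-coefficient triangular transformations dictated by the joint requirements of preserving the chain and the UL-duality are actually rich enough to kill every super-antidiagonal entry, and that treating antidiagonals in order of decreasing $k+k'$ keeps the procedure strictly triangular so that each clearing step leaves untouched the entries already zeroed out. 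The evenness of $d_\nu$ enters crucially at precisely the point where the antisymmetry of $G$ is reconciled with the Jordan recursion on the principal antidiagonal.
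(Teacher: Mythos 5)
Your argument addresses only one of the two situations the proposition must cover, and not the one the paper's own proof (Appendix~\ref{app:sub-B2}) is devoted to. You work entirely inside a single Jordan block, form the Gram matrix $G_{k,k'}=\vect{\xi}_k^{(\nu)\dagger}\,\mathsf{N}\,\vect{\xi}_{k'}^{(\nu)}$, and then invoke ``the assumption that the block is self LR-dual'' to secure $G_{1,d_\nu}\ne 0$. The proposition carries no such hypothesis: a null-eigenvalue Jordan block may have as its LR-dual partner a \emph{different} Jordan block of the same dimension (the generic situation, e.g.\ the $J_\pm$ blocks under broken rotational symmetry). In that case Prop.~\ref{theor:one-to-one_basis} forces $G\equiv 0$ on the block — each $\vect{\xi}_{k'}^{(\nu)}$ overlaps only with $\mathsf{N}\,\bbar{\vect{\xi}}_{k'}^{(\nu)}$, which now lies in the partner block — so your construction has nothing to normalize, and no basis change confined to $\mathcal{W}_\nu$ can ever produce $\bbar{\vect{\eta}}_k=\vect{\eta}_{d_\nu+1-k}$, since the LR-duals necessarily live in $\bbar{\mathcal{W}}_\nu\ne\mathcal{W}_\nu$. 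The paper's proof consists precisely of the mixing step you omit: it forms $\gamma_k\,\vect{\xi}_k^{(\nu)}\pm\bbar{\vect{\xi}}_{d_\nu+1-k}^{(\nu)}$ with real weights $\gamma_k=\gamma_{d_\nu+1-k}$ determined recursively from Eqs.~(\ref{eq:Jordan}) and (\ref{eq:Jordan-inv}) so that each combination is again a Jordan chain, and then checks that the two resulting chains are each simultaneously self UL-dual and self LR-dual (the $J_\pm\to J_x,J_y$ construction described right after the proposition). Without this your proof does not establish the statement.

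What you do prove is complementary and not without value: for a block whose LR-dual partner is itself, your Gram-matrix analysis (pure-imaginary antisymmetric $G$, vanishing below the principal antidiagonal, antidiagonal entries tied together by the real recursion $c_{k-1}^{(\nu)}G_{k-1,k'}=-c_{k'-1}^{(\nu)}G_{k,k'-1}$, evenness of $d_\nu$ forced by antisymmetry) substantiates the normalization $\bbar{\vect{\xi}}_k^{(\nu)\dagger}\,\mathsf{N}\,\vect{\xi}_k^{(\nu)}=(-)^k i$ that the paper simply adopts as a convention. But the clearing of the super-antidiagonal entries, which you yourself flag as unverified, still needs an argument before even this half is complete. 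A useful observation there: walking your recursion across an antidiagonal $k+k'=s$ gives $G_{s-k,k}=(-)^s\,G_{k,s-k}$ because the $c$-ratios telescope to unity, so antisymmetry annihilates every \emph{even} antidiagonal automatically; only the odd super-antidiagonals $s=d_\nu+3,\ldots,2d_\nu-1$ remain, and their number matches the real parameters available in a chain- and UL-duality-preserving triangular substitution. Supplying that count-and-triangularity check, and above all the distinct-partner case, is what separates your sketch from a proof.
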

\noindent
This proposition is proven in Appendix~\ref{app:sub-B2}.
An example of the transformation shown in Appendix~\ref{app:sub-B2}
is given by the NG mode of the angular momentum,
under SSB with respect to the rotation.
Even though the Jordan blocks corresponding to $J_\pm$
are the LR-dual of each other,
their linear combinations provide those corresponding to $J_x$ and $J_y$,
each of which could be self LR-dual.

Let us focus on the case that the basis vectors simultaneously fulfill
$\mathsf{\Sigma}_x\vect{\xi}_k^{(\nu)\ast}=-\vect{\xi}_k^{(\nu)}$
and $\bbar{\vect{\xi}}_k^{(\nu)}=\vect{\xi}_{d_\nu+1-k}^{(\nu)}$, with even $d_\nu$.
Then the projector of Eq.~(\ref{eq:proj-nuk})
has the relation $\mathsf{N\,\Lambda}_{\nu,k}^\dagger\,\mathsf{N}
=\mathsf{\Lambda}_{\nu,d_\nu+1-k}$,
and therefore obeys
\be\begin{split} &\mathsf{\Lambda}_{\nu,k}+\mathsf{\Lambda}_{\nu,d_\nu+1-k}
 =\mathsf{N}\,\big[\mathsf{\Lambda}_{\nu,k}
 +\mathsf{\Lambda}_{\nu,d_\nu+1-k}\big]^\dagger\,\mathsf{N}\,,\\
&\mathsf{\Sigma}_x\,\big[\mathsf{\Lambda}_{\nu,k}
 +\mathsf{\Lambda}_{\nu,d_\nu+1-k}\big]^\ast\,\mathsf{\Sigma}_x
 = \mathsf{\Lambda}_{\nu,k}+\mathsf{\Lambda}_{\nu,d_\nu+1-k}\,,
\label{eq:null-proj2}\end{split}\ee
although the relation analogous to Prop.~\ref{theor:SN-proj}
(Eq.~(\ref{eq:Snu-separation}), in particular)
does not necessarily hold.
Like the arguments using $\mathsf{\Lambda}_{[\nu]}$ in Sec.~\ref{sec:decomp},
$\mathsf{\Lambda}_{\nu,k}+\mathsf{\Lambda}_{\nu,d_\nu+1-k}$
produces a subspace keeping both the LR- and UL-dualities.
For doubly self-dual Jordan blocks,
the basis vectors for the NG-mode may be removed two by two via the projectors,
with minimal d.o.f. coupled to them.
Within the two-dimensional subspace
given by $\mathsf{\Lambda}_{\nu,d_\nu/2}+\mathsf{\Lambda}_{\nu,d_\nu/2+1}$,
the prescription proposed in Refs.~\cite{ref:Thou61,ref:TV62}
(and well summarized in Refs.~\cite{ref:RS80,ref:Row70})
is applicable.

If $d_\nu=\mathrm{odd}$, there should be two Jordan blocks,
which are the partner of the LR- (or UL-) duality of each other.
To separate them, one may apply the projector $\Lambda_{[\nu]}$
introduced in Sec.~\ref{sec:decomp}.

\section{Summary}\label{sec:summary}

Properties of solutions of the RPA equation is reanalyzed
in terms of the linear algebra.
As well as eigensolutions, cases in which the matrix $\mathsf{N\,S}$
(and $\mathsf{S\,N}$) forms Jordan blocks are examined.
Two types of dualities of eigenvectors and basis vectors,
which are called LR- and UL-dualities in this article,
are pointed out and explored.
These dualities are useful to clarify properties of the RPA solutions.
Projection respecting the dualities is developed.

Eigenvalues given by the RPA equation are classified
into five classes, in Prop.~\ref{prop:eigenvalues}.
As pointed out by Thouless, all solutions are physical ones
if the stability matrix is positive-definite.
Its opposite is also true (in absence of NG modes),
being useful to judge stability of a MF solution
from numerical calculations in the RPA.
These solutions are singled out, not constituting Jordan blocks,
and have the self LR-duality while are paired by the UL-duality.
Eigenvectors and basis vectors for pure-imaginary eigenvalues
can be made self UL-dual, and paired by the LR-duality.
With no self dualities, quartet solutions manifest two types of the dualities.
NG-mode solutions, which are associated with the null eigenvalue
and often related to the spontaneous symmetry breaking,
lie at intersection of the two self dualities.
However, a single vector cannot be both self LR-dual and self UL-dual.
Only even-dimensional Jordan blocks can have double self dualities.
The well-known prescription of separating out the NG modes
could be applicable to such cases.

\section*{Acknowledgment}

The author is grateful to K.~Matsuyanagi, K.~Neerg\aa rd, H.~Kurasawa, Y.R.~Shimizu,
J.~Terasaki and T.~Inakura for discussions.
This work is financially supported in part
by JSPS KAKENHI Grant Number~24105008 and Grant Number~16K05342.

\appendix
\section{Explicit proofs of proposition and lemmas
in Sec.~\protect\ref{subsec:Jordan}}
\label{app:proof-Jordan}

Analogously to Eq.~(\ref{eq:norm-relation}),
the following relation is obtained from Eq.~(\ref{eq:Jordan}),
\be\begin{split}
 \vect{\xi}_{k+1}^{(\nu)\dagger}\,\mathsf{S}\,\vect{\xi}_{k'+1}^{(\nu')}
 &= \omega_\nu^\ast\,\vect{\xi}_{k+1}^{(\nu)\dagger}\,\mathsf{N}\,\vect{\xi}_{k'+1}^{(\nu')}
  - ic_k^{(\nu)\ast}\,\vect{\xi}_k^{(\nu)\dagger}\,\mathsf{N}\,
    \vect{\xi}_{k'+1}^{(\nu')}\\
 &= \omega_{\nu'}\,\vect{\xi}_{k+1}^{(\nu)\dagger}\,\mathsf{N}\,\vect{\xi}_{k'+1}^{(\nu')}
  + ic_{k'}^{(\nu')}\,\vect{\xi}_{k+1}^{(\nu)\dagger}\,\mathsf{N}\,
    \vect{\xi}_{k'}^{(\nu')}\,.
\end{split}\label{eq:Jordan-orthogonality}\ee
While Eq.~(\ref{eq:Jordan}) runs for $k=1,\cdots,d_\nu-1$,
it can be extended to $k=0$ and $d_\nu$
if we assume $\vect{\xi}_0^{(\nu)}=\vect{\xi}_{d_\nu+1}^{(\nu)}=\vect{0}$
and $c_{d_\nu}^{(\nu)}=0$.
Equation~(\ref{eq:Jordan-orthogonality}) is also extended accordingly.

\subsection{Proof of Lemma~\protect\ref{lem:Jordan-orthogonality}}
\label{app:sub-A1}

\begin{proof}
Suppose $\omega_\nu^\ast\ne\omega_{\nu'}$.
Let us first take $k=k'=0$ in Eq.~(\ref{eq:Jordan-orthogonality}).
As Lemma~\ref{lem:bi-orthogonality} states,
$\vect{\xi}_1^{(\nu)\dagger}\,\mathsf{N}\,\vect{\xi}_1^{(\nu')}
=\vect{x}_\nu^\dagger\,\mathsf{N}\,\vect{x}_{\nu'}=0$ follows.
We then increase $k'$, with keeping $k=0$.
For $k'=1$, Eq.~(\ref{eq:Jordan-orthogonality}) yields
\be\begin{split}
 \omega_\nu^\ast\,\vect{\xi}_1^{(\nu)\dagger}\,\mathsf{N}\,\vect{\xi}_2^{(\nu')}
 &= \omega_{\nu'}\,\vect{\xi}_1^{(\nu)\dagger}\,\mathsf{N}\,\vect{\xi}_2^{(\nu')}
  + ic_1^{(\nu')}\,\vect{\xi}_1^{(\nu)\dagger}\,\mathsf{N}\,\vect{\xi}_1^{(\nu')}\\
 &= \omega_{\nu'}\,\vect{\xi}_1^{(\nu)\dagger}\,\mathsf{N}\,\vect{\xi}_2^{(\nu')}\,.
\end{split}\label{eq:Jordan-orthogonality1}\ee
This derives $\vect{\xi}_1^{(\nu)\dagger}\,\mathsf{N}\,\vect{\xi}_2^{(\nu')}=0$.
Thus, by successively applying Eq.~(\ref{eq:Jordan-orthogonality}),
$\vect{\xi}_1^{(\nu)\dagger}\,\mathsf{N}\,\vect{\xi}_{k'}^{(\nu')}=0$ follows
for all $k'$.\\
Let us next increase $k$.
For $k=1$, Eq.~(\ref{eq:Jordan-orthogonality}) leads to
\be\begin{split}
 \omega_\nu^\ast\,\vect{\xi}_2^{(\nu)\dagger}\,\mathsf{N}\,\vect{\xi}_{k'+1}^{(\nu')}
  - ic_1^{(\nu)\ast}\,\vect{\xi}_1^{(\nu)\dagger}\,\mathsf{N}\,
    \vect{\xi}_{k'+1}^{(\nu')}
 &= \omega_{\nu'}\,\vect{\xi}_2^{(\nu)\dagger}\,\mathsf{N}\,\vect{\xi}_{k'+1}^{(\nu')}
  + ic_{k'}^{(\nu')}\,\vect{\xi}_2^{(\nu)\dagger}\,\mathsf{N}\,\vect{\xi}_{k'}^{(\nu')}\\
 =\omega_\nu^\ast\,\vect{\xi}_2^{(\nu)\dagger}\,\mathsf{N}\,\vect{\xi}_{k'+1}^{(\nu')}
 &\,.
\end{split}\label{eq:Jordan-orthogonality2}\ee
In a similar manner to the $k=0$ case,
$\vect{\xi}_2^{(\nu)\dagger}\,\mathsf{N}\,\vect{\xi}_{k'}^{(\nu')}=0$ can be shown
for all $k'$
by applying Eq.~(\ref{eq:Jordan-orthogonality2})
with increasing $k'$ from $k'=0$.
The lemma is proven by repeating this process for increasing $k$.
\end{proof}

\subsection{Proof of Lemma~\protect\ref{lem:Jordan-normalizability}}
\label{app:sub-A2}

\begin{proof}
Suppose $\omega_\nu^\ast=\omega_{\nu'}$.
In this case Eq.~(\ref{eq:Jordan-orthogonality}) yields
\be -c_k^{(\nu)\ast}\,\vect{\xi}_k^{(\nu)\dagger}\,\mathsf{N}\,
    \vect{\xi}_{k'+1}^{(\nu')}
 = c_{k'}^{(\nu')}\,\vect{\xi}_{k+1}^{(\nu)\dagger}\,\mathsf{N}\,
    \vect{\xi}_{k'}^{(\nu')}\,.
\label{eq:Jordan-orthogonality3}\ee
Corresponding to the $k=0$ case in Eq.~(\ref{eq:Jordan-orthogonality3}),
$c_{k'}^{(\nu')}\,\vect{\xi}_1^{(\nu)\dagger}\,\mathsf{N}\,\vect{\xi}_{k'}^{(\nu')}=0$.
This concludes that $\vect{\xi}_1^{(\nu)\dagger}\,\mathsf{N}\,
\vect{\xi}_{k'}^{(\nu')}$ can be non-zero
only when $\vect{\xi}_{k'+1}^{(\nu')}$ does not exist.
Namely, $\vect{x}_\nu=\vect{\xi}_1^{(\nu)}$ can overlap
only with the last basis vector $\mathsf{N}\,\vect{\xi}_{d_{\nu'}}^{(\nu')}$
of the Jordan block of $\mathsf{S\,N}$.\\
By setting $k=1$,
$-c_1^{(\nu)\ast}\,\vect{\xi}_1^{(\nu)\dagger}\,\mathsf{N}\,
    \vect{\xi}_{k'+1}^{(\nu')}
 = c_{k'}^{(\nu')}\,\vect{\xi}_2^{(\nu)\dagger}\,\mathsf{N}\,
    \vect{\xi}_{k'}^{(\nu')}$ is obtained.
The l.h.s. vanishes unless $c_1^{(\nu)}\ne 0$ and $k'=d_{\nu'}-1$.
Therefore, for $d_\nu\geq 2$, $\vect{\xi}_2^{(\nu)}$ can overlap
only with the two last basis vectors $\mathsf{N}\,\vect{\xi}_{d_{\nu'}-1}^{(\nu')}$
and $\mathsf{N}\,\vect{\xi}_{d_{\nu'}}^{(\nu')}$
of the Jordan block of $\mathsf{S\,N}$.
If it is non-zero,
$\vect{\xi}_2^{(\nu)\dagger}\,\mathsf{N}\,\vect{\xi}_{d_{\nu'}}^{(\nu')}$
can be made zero
by a proper transformation $\vect{\xi}_2^{(\nu)}\,\rightarrow\,
\vect{\xi}_2^{(\nu)}+\beta\,\vect{\xi}_1^{(\nu)}$\,($\beta\in\mathbf{C}$).
$\vect{\xi}_k^{(\nu)}$ with $k\geq 3$ is also transformed, accordingly.
The first part of the lemma is proven by repeating this argument.\\
The additional part of the lemma concerns the case
in which plural Jordan blocks have an equal eigenvalue $\omega_\nu$.
If $d_\nu>d_{\nu'}$, it is concluded
by continuing the above argument until reaching $k'=1$
that $\vect{\xi}_{d_{\nu'}}^{(\nu)\dagger}\,\mathsf{N}\,\vect{\xi}_1^{(\nu')}$
can be non-zero,
but $\vect{\xi}_k^{(\nu)\dagger}\,\mathsf{N}\,\vect{\xi}_1^{(\nu')}=0$
for $k>d_{\nu'}$ including $k=d_\nu$.
By reversing the above argument,
this indicates $\vect{\xi}_k^{(\nu)\dagger}\,\mathsf{N}\,\vect{\xi}_1^{(\nu')}=0$
for any $k$,
and then $\vect{\xi}_k^{(\nu)\dagger}\,\mathsf{N}\,\vect{\xi}_{k'}^{(\nu')}=0$
for any $k$ and $k'$.
The same holds for the $d_\nu<d_{\nu'}$ case.
\end{proof}

\subsection{Proof of Prop.~\protect\ref{theor:one-to-one_basis}}
\label{app:sub-A3}

\begin{proof}
Since the basis vectors of $\mathsf{S\,N}$ spans a complete set,
any non-vanishing vector overlaps with at least one of them.
Consider the case that some basis vectors of $\mathsf{N\,S}$ have overlap
with plural basis vectors of $\mathsf{S\,N}$.
It is sufficient to consider
that one of the basis vectors of a single Jordan block
(or an eigenvector as a special case)
$\big\{\vect{\xi}_k^{(\nu)}; k=1,\cdots,d_\nu\big\}$ have non-vanishing overlaps
with members of two Jordan blocks (or two eigenvectors) of $\mathsf{S\,N}$,
$\big\{\mathsf{N}\,\vect{\xi}_{k'}^{(\nu')}; k'=1,\cdots,d_{\nu'}\big\}$ and
$\big\{\mathsf{N}\,\vect{\xi}_{k'}^{(\nu'')}; k'=1,\cdots,d_{\nu''}\big\}$.
From Lemmas~\ref{lem:Jordan-orthogonality}
and \ref{lem:Jordan-normalizability},
$\omega_\nu^\ast=\omega_{\nu'}=\omega_{\nu''}$ and $d_\nu=d_{\nu'}=d_{\nu''}$.
Consider a linear combination
$\alpha_{k'}\vect{\xi}_{k'}^{(\nu')}+\beta_{k'}\vect{\xi}_{k'}^{(\nu'')}$
($\alpha_{k'},\beta_{k'}\in\mathbf{C}$).
Equation~(\ref{eq:Jordan}) yields
\be\begin{split} \mathsf{S}\,\big[\alpha_{k'+1}\vect{\xi}_{k'+1}^{(\nu')}
 +\beta_{k'+1}\vect{\xi}_{k'+1}^{(\nu'')}\big]
 &= \omega_\nu\,\mathsf{N}\,\big[\alpha_{k'+1}\vect{\xi}_{k'+1}^{(\nu')}
 +\beta_{k'+1}\vect{\xi}_{k'+1}^{(\nu'')}\big]\\
 &~+ i\,\mathsf{N}\,\big[c_{k'}^{(\nu')}\alpha_{k'+1}\vect{\xi}_{k'}^{(\nu')}
 +c_{k'}^{(\nu'')}\beta_{k'+1}\vect{\xi}_{k'}^{(\nu'')}\big]\,.
\end{split}\ee
By imposing
\be \frac{\alpha_{k'+1}}{\beta_{k'+1}}
 = \frac{\alpha_{k'}/c_{k'}^{(\nu')}}{\beta_{k'}/c_{k'}^{(\nu'')}}\,,
\label{eq:a&b-recursive}\ee
and determining $\alpha_{k'}$ and $\beta_{k'}$ recursively,
$\Big\{\mathsf{N}\,\big[\alpha_{k'}\vect{\xi}_{k'}^{(\nu')}
 +\beta_{k'}\vect{\xi}_{k'}^{(\nu'')}\big]; k'=1,\cdots,d_{\nu'}\Big\}$
also forms a Jordan block of $\mathsf{S\,N}$.
However, we may choose $\alpha_1$ and $\beta_1$
such that $\vect{\xi}_{d_\nu}^{(\nu)\dagger}\,\mathsf{N}\,
\big[\alpha_1\vect{x}_{\nu'}+\beta_1\vect{x}_{\nu''}\big]=0$.
Then, with $\alpha_{k'}$ and $\beta_{k'}$
determined by Eq.~(\ref{eq:a&b-recursive}),
all basis vectors belonging to the Jordan block of $\vect{x}_\nu$
are orthogonal to all vectors in
$\Big\{\mathsf{N}\,\big[\alpha_{k'}\vect{\xi}_{k'}^{(\nu')}
 +\beta_{k'}\vect{\xi}_{k'}^{(\nu'')}\big]; k'=1,\cdots,d_{\nu'}\Big\}$,
as is clear from the argument in Appendix~\ref{app:sub-A1}.
A Jordan block independent of it is left non-orthogonal
to $\big\{\vect{\xi}_k^{(\nu)}; k=1,\cdots,d_\nu\big\}$.
\end{proof}

\subsection{Derivation of Eq.~(\protect\ref{eq:Jordan-inv})}
\label{app:sub-A4}

Expansion of $\mathsf{S}\,\bbar{\vect{\xi}}_{k}^{(\nu)}$
by $\big\{\mathsf{N}\,\bbar{\vect{\xi}}_{k'}^{(\nu')}\big\}$ leads to
\be \mathsf{S}\,\bbar{\vect{\xi}}_k^{(\nu)}
 = \sum_{\nu',k'} \mathsf{\Lambda}_{\nu',k'}^\dagger\,\mathsf{S}\,
 \bbar{\vect{\xi}}_k^{(\nu)}
 = \sum_{\nu',k'}
 \frac{\mathsf{N}\,\bbar{\vect{\xi}}_{k'}^{(\nu')}\,\vect{\xi}_{k'}^{(\nu')\dagger}}
 {\vect{\xi}_{k'}^{(\nu')\dagger}\,\mathsf{N}\,\bbar{\vect{\xi}}_{k'}^{(\nu')}}\,
 \mathsf{S}\,\bbar{\vect{\xi}}_k^{(\nu)}\,,
\label{eq:Jordan-inv-proj}\ee
where $\mathsf{\Lambda}_{\nu,k}$ is the projector
defined in Sec.~\ref{sec:decomp}.
By inserting Eq.~(\ref{eq:Jordan-me}) into Eq.~(\ref{eq:Jordan-inv-proj}),
Eq.~(\ref{eq:Jordan-inv}) is derived.

\section{Additional proofs for Sec.~\protect\ref{sec:stable&unstable}}
\label{app:proof-NG}

\subsection{Verification of compatibility
of Lemma~\protect\ref{lem:pure-imag-Jordan}
with Prop.~\protect\ref{theor:one-to-one_basis}}
\label{app:sub-B1}

Suppose that two sets of basis vectors
$\big\{\mathsf{N}\,\vect{\xi}_k^{(\nu')}; k=1,\cdots,d_{\nu'}\big\}$ and
$\big\{\mathsf{N}\,\vect{\xi}_k^{(\nu'')}; k=1,\cdots,d_{\nu'}\big\}$
associated with $\omega_{\nu'}$, where $\mathrm{Re}(\omega_{\nu'})=0$,
obey the convention of Lemma~\ref{lem:pure-imag-Jordan}.
Since a linear combination
$\alpha_k\vect{\xi}_k^{(\nu')}+\beta_k\vect{\xi}_k^{(\nu'')}$
($\alpha_k,\beta_k\in\mathbf{C}$) yields
\be \mathsf{\Sigma}_x\,
 \big[\alpha_k\vect{\xi}_k^{(\nu')}+\beta_k\vect{\xi}_k^{(\nu'')}\big]^\ast
 = -\big[\alpha_k^\ast\vect{\xi}_k^{(\nu')}+\beta_k^\ast\vect{\xi}_k^{(\nu'')}\big]\,,
\ee
$\alpha_k$ and $\beta_k$ must be real for all $k\,(=1,\cdots,d_\nu)$
in order to ensure that
the proof of Prop.~\ref{theor:one-to-one_basis} in Appendix~\ref{app:sub-A3}
is applicable without influencing the convention
of Lemma~\ref{lem:pure-imag-Jordan}.
The condition for this to be possible is
that $(\vect{\xi}_{d_\nu}^{(\nu)\dagger}\,\mathsf{N}\,\vect{x}_{\nu'})\big/
(\vect{\xi}_{d_\nu}^{(\nu)\dagger}\,\mathsf{N}\,\vect{x}_{\nu''})$ is real
for a basis vector $\vect{\xi}_k^{(\nu)}$ in \ref{app:sub-A3},
and that $c_k^{(\nu')}/c_k^{(\nu'')}$ is real, at the same time.
The former is the condition for
$\vect{\xi}_{d_\nu}^{(\nu)\dagger}\,\mathsf{N}\,
\big[\alpha_1\vect{x}_{\nu'}+\beta_1\vect{x}_{\nu''}\big]=0$ to be possible
with real $\alpha_1$ and $\beta_1$.
This is satisfied in the convention of Lemma~\ref{lem:pure-imag-Jordan},
because $\vect{\xi}_{d_\nu}^{(\nu)\dagger}\,\mathsf{N}\,\vect{x}_{\nu'}$
and $\vect{\xi}_{d_\nu}^{(\nu)\dagger}\,\mathsf{N}\,\vect{x}_{\nu''}$
are pure imaginary.
The latter is the condition
for the recursive relation of Eq.~(\ref{eq:a&b-recursive})
not to break $\alpha_k,\beta_k\in\mathbf{R}$,
and is fulfilled by the convention $c_k^{(\nu')},c_k^{(\nu'')}\in\mathbf{R}$.

\subsection{Proof of Prop.~\protect\ref{theor:null-self-dual}}
\label{app:sub-B2}

\begin{proof}
In the cases under discussion,
both $\vect{x}_\nu$ and $\bbar{\vect{x}}_\nu\,(=\bbar{\vect{\xi}}_{d_\nu}^{(\nu)})$
belong to the same eigenvalue that is zero.
Equations~(\ref{eq:Jordan}) and (\ref{eq:Jordan-inv}) become
\be\begin{split}
 \mathsf{S}\,\vect{\xi}_{k+1}^{(\nu)}
 &= ic_k^{(\nu)}\,\mathsf{N}\,\vect{\xi}_k^{(\nu)}\,,\\
 \mathsf{S}\,\bbar{\vect{\xi}}_{d_\nu-k}^{(\nu)}
 &= - ic_{d_\nu-k}^{(\nu)\ast}\,
\frac{\vect{\xi}_{d_\nu-k}^{(\nu)\dagger}\,\mathsf{N}\,\bbar{\vect{\xi}}_{d_\nu-k}^{(\nu)}}
 {\vect{\xi}_{d_\nu+1-k}^{(\nu)\dagger}\,\mathsf{N}\,\bbar{\vect{\xi}}_{d_\nu+1-k}^{(\nu)}}\,
 \mathsf{N}\,\bbar{\vect{\xi}}_{d_\nu+1-k}^{(\nu)}\,.
\end{split}\ee
If $\vect{x}_\nu\ne\bbar{\vect{x}}_\nu$, their linear combination
$\alpha_k\vect{\xi}_k^{(\nu)}+\beta_k\bbar{\vect{\xi}}_{d_\nu+1-k}^{(\nu)}$
($\alpha_k,\beta_k\in\mathbf{C}$) obeys
\be \mathsf{S}\,\big[\alpha_{k+1}\vect{\xi}_{k+1}^{(\nu)}
 +\beta_{k+1}\bbar{\vect{\xi}}_{d_\nu-k}^{(\nu)}\big]
 = i\,\mathsf{N}\,\bigg[c_k^{(\nu)}\,\alpha_{k+1}\vect{\xi}_k^{(\nu)}
 -\frac{\vect{\xi}_{d_\nu-k}^{(\nu)\dagger}\,\mathsf{N}\,
 \bbar{\vect{\xi}}_{d_\nu-k}^{(\nu)}}
  {\vect{\xi}_{d_\nu+1-k}^{(\nu)\dagger}\,\mathsf{N}\,\bbar{\vect{\xi}}_{d_\nu+1-k}^{(\nu)}}
 \,c_{d_\nu-k}^{(\nu)\ast}\beta_{k+1}\bbar{\vect{\xi}}_{d_\nu+1-k}^{(\nu)}\bigg]\,.
\ee
Therefore, for an arbitrary set $(\alpha_1,\beta_1)$,
a Jordan block is obtained by determining $(\alpha_k,\beta_k)$ for $k\geq 2$
from
\be \frac{\alpha_{k+1}}{\beta_{k+1}}
 = -\frac{c_{d_\nu-k}^{(\nu)\ast}}{c_k^{(\nu)}}\,
 \frac{\vect{\xi}_{d_\nu-k}^{(\nu)\dagger}\,\mathsf{N}\,\bbar{\vect{\xi}}_{d_\nu-k}^{(\nu)}}
 {\vect{\xi}_{d_\nu+1-k}^{(\nu)\dagger}\,\mathsf{N}\,\bbar{\vect{\xi}}_{d_\nu+1-k}^{(\nu)}}\,
 \frac{\alpha_k}{\beta_k}\,.
\label{eq:null-dual-recursion}\ee
This reaches
\be \frac{\alpha_{d_\nu}}{\beta_{d_\nu}}
 = (-)^{d_\nu-1}\bigg[\prod_{k=1}^{d_\nu-1}\frac{c_k^{(\nu)\ast}}{c_k^{(\nu)}}\bigg]\,
 \frac{\vect{\xi}_1^{(\nu)\dagger}\,\mathsf{N}\,\bbar{\vect{\xi}}_1^{(\nu)}}
 {\vect{\xi}_{d_\nu}^{(\nu)\dagger}\,\mathsf{N}\,\bbar{\vect{\xi}}_{d_\nu}^{(\nu)}}\,
 \frac{\alpha_1}{\beta_1}\,.
\label{eq:null-dual-coef}\ee
It is assumed here that $\vect{\xi}_k^{(\nu)}$ is self UL-dual
as in Lemma~\ref{lem:pure-imag-Jordan},
and the convention $c_k^{(\nu)}>0$ and
$\bbar{\vect{\xi}}_k^{(\nu)\dagger}\,\mathsf{N}\,\vect{\xi}_k^{(\nu)}=(-)^k\,i$
($k=1,\cdots,d_\nu$) is adopted with even $d_\nu$.
Equation~(\ref{eq:null-dual-recursion}) then comes $\alpha_{k+1}/\beta_{k+1}
=(c_{d_\nu-k}^{(\nu)}/c_k^{(\nu)})\,(\alpha_k/\beta_k)$,
leading to
\be \frac{\alpha_k}{\beta_k}
 = \bigg[\prod_{k'=1}^{k-1}\frac{c_{d_\nu-k'}^{(\nu)}}{c_{k'}^{(\nu)}}\bigg]\,
  \frac{\alpha_1}{\beta_1}
 = \bigg[\frac{\prod_{k'=1}^{d_\nu-1} c_{k'}}
   {\prod_{k'=1}^{k-1} c_{k'}\prod_{k'=1}^{d_\nu-k} c_{k'}}\bigg]\,
  \frac{\alpha_1}{\beta_1}
 = \frac{\alpha_{d_\nu+1-k}}{\beta_{d_\nu+1-k}}\,.
\ee
In particular, Eq.~(\ref{eq:null-dual-coef}) is now
$\alpha_{d_\nu}/\beta_{d_\nu}=\alpha_1/\beta_1$.
Defining new basis vectors by
\be\begin{split}
 \vect{\eta}_k := \frac{1}{\sqrt{2\gamma_k}}\big[\gamma_k\,\vect{\xi}_k^{(\nu)}
    +\bbar{\vect{\xi}}_{d_\nu+1-k}^{(\nu)}\big]\,,\quad&
 \vect{\eta}'_k := \frac{(-)^k}{\sqrt{2\gamma_k}}
 \big[\gamma_k\,\vect{\xi}_k^{(\nu)}-\bbar{\vect{\xi}}_{d_\nu+1-k}^{(\nu)}\big]\,;\\
 &\quad \gamma_1:=1\,,\quad
 \gamma_k:=\prod_{k'=1}^{k-1}\frac{c_{d_\nu-k'}^{(\nu)}}{c_{k'}^{(\nu)}}\quad
 (\mbox{for $k\geq 2$})\,,
\end{split}\ee
we have
\be\begin{split}
 \vect{\eta}_{d_\nu+1-k}^\dagger\,\mathsf{N}\,\vect{\eta}_k
 &= \frac{1}{2\gamma_k}\,\big[\gamma_k\,\vect{\xi}_{d_\nu+1-k}^{(\nu)\dagger}
   +\bbar{\vect{\xi}}_k^{(\nu)\dagger}\big]\,\mathsf{N}\,
 \big[\gamma_k\,\vect{\xi}_k^{(\nu)}+\bbar{\vect{\xi}}_{d_\nu+1-k}^{(\nu)}\big]\\
 &= \frac{1}{2}\,
 \big[\,\bbar{\vect{\xi}}_k^{(\nu)\dagger}\,\mathsf{N}\,\vect{\xi}_k^{(\nu)}
 + \vect{\xi}_{d_\nu+1-k}^{(\nu)\dagger}\,\mathsf{N}\,
 \bbar{\vect{\xi}}_{d_\nu+1-k}^{(\nu)}\big]
 = (-)^k\,i\,,\\
 \vect{\eta}_{d_\nu+1-k}^{\prime\dagger}\,\mathsf{N}\,\vect{\eta}'_k
 &= -\frac{1}{2\gamma_k}\,
 \big[\gamma_k\,\vect{\xi}_{d_\nu+1-k}^{(\nu)\dagger}
   -\bbar{\vect{\xi}}_k^{(\nu)\dagger}\big]\,\mathsf{N}\,
 \big[\gamma_k\,\vect{\xi}_k^{(\nu)}-\bbar{\vect{\xi}}_{d_\nu+1-k}^{(\nu)}\big]\\
 &= \frac{1}{2}\,
 \big[\,\bbar{\vect{\xi}}_k^{(\nu)\dagger}\,\mathsf{N}\,\vect{\xi}_k^{(\nu)}
 + \vect{\xi}_{d_\nu+1-k}^{(\nu)\dagger}\,\mathsf{N}\,
 \bbar{\vect{\xi}}_{d_\nu+1-k}^{(\nu)}\big]
 = (-)^k\,i\,,\\
 \vect{\eta}_{d_\nu+1-k}^\dagger\,\mathsf{N}\,\vect{\eta}'_k
 &= \frac{(-)^k}{2\gamma_k}\,
 \big[\gamma_k\,\vect{\xi}_{d_\nu+1-k}^{(\nu)\dagger}
   +\bbar{\vect{\xi}}_k^{(\nu)\dagger}\big]\,\mathsf{N}\,
 \big[\gamma_k\,\vect{\xi}_k^{(\nu)}-\bbar{\vect{\xi}}_{d_\nu+1-k}^{(\nu)}\big]\\
 &= \frac{(-)^k}{2}\,
 \big[\,\bbar{\vect{\xi}}_k^{(\nu)\dagger}\,\mathsf{N}\,\vect{\xi}_k^{(\nu)}
 - \vect{\xi}_{d_\nu+1-k}^{(\nu)\dagger}\,\mathsf{N}\,
 \bbar{\vect{\xi}}_{d_\nu+1-k}^{(\nu)}\big]
 = 0\,.
\end{split}\label{eq:null-LR}\ee
Note that $\gamma_k$ satisfies $\gamma_{d_\nu+1-k}=\gamma_k$.
Equation~(\ref{eq:null-LR}) allows us to identify
$\bbar{\vect{\eta}}_k=\vect{\eta}_{d_\nu+1-k}$
and $\bbar{\vect{\eta}}'_k=\vect{\eta}'_{d_\nu+1-k}$,
indicating that the Jordan blocks of $\vect{\eta}_k$ and $\vect{\eta}'_k$
have self LR-duality,
with keeping the UL-duality for each $\vect{\eta}_k$ and $\vect{\eta}'_k$.
\end{proof}

\section{Simple examples}

\subsection{$2\times 2$ stability matrix}
\label{app:2*2}

It is instructive to consider an example of $D=1$,
in which all classes of the solutions of Prop.~\ref{prop:eigenvalues}
except \ref{item:quartet} come out.
In this case, the stability matrix $\mathsf{S}$ is set to be
\be \mathsf{S} = \begin{pmatrix} a & b \\
 b^\ast & a \end{pmatrix}\,;\quad
 (a\in\mathbf{R},~b\in\mathbf{C}). \ee
It is easy to recognize that the solutions are categorized according to
Prop.~\ref{prop:eigenvalues} in the following manner,
in correspondence to eigenvalues of $\mathsf{S}$:
\begin{center}\begin{tabular}{cll}
 \protect\ref{item:positive}  & for $a>|b|\geq 0$
 &($\mathsf{S}$ is positive-definite)\,,\\
 \protect\ref{item:negative} & for $-a>|b|\geq 0$
 &($\mathsf{S}$ is negative-definite)\,,\\
 \protect\ref{item:pure-imag} & for $|a|<|b|$
 &($\mathsf{S}$ has a positive and a negative eigenvalues) \,,\\
 \protect\ref{item:null} & for $|a|=|b|$
 &($\mathsf{S}$ has one or two null eigenvalues)\,.\\
\end{tabular}\end{center}
Lemma~\ref{lem:pure-imag-eigen} is confirmed
for the eigenvectors in the case of \ref{item:pure-imag}.
For \ref{item:null}, $\mathsf{N\,S}$ provides a two-dimensional Jordan block,
as long as $\mathsf{S}$ has a single null eigenvalue
(\textit{i.e.}, unless $a=b=0$).

\subsection{Secular equation for $4\times 4$ stability matrix}
\label{app:4*4-eq}

The $D=2$ case, which gives $4\times 4$ stability matrix,
also supplies several instructive examples.
The matrices of Eq.~(\ref{eq:sym-AB}) are expressed as
\be A=\begin{pmatrix} a & b\,e^{i\theta_b} \\
 b\,e^{-i\theta_b} & f \end{pmatrix}\,,
\quad B=\begin{pmatrix} c\,e^{i\theta_c} & d\,e^{i\theta_d} \\
 d\,e^{i\theta_d} & h\,e^{i\theta_h} \end{pmatrix}\,.\quad
 (a,b,c,d,f,h,\theta_b,\theta_c,\theta_d,\theta_h\in\mathbf{R})
\label{eq:4*4}\ee
The l.h.s. of the secular equation (Eq.~(\ref{eq:secular-lhs})) becomes
\be\begin{split}
 \det(\mathsf{S}-\omega\,\mathsf{N})
 &= \omega^4-\big[(a^2-c^2)+(f^2-h^2)+2(b^2-d^2)\big]\,\omega^2\\
 &~+(a^2-c^2)\,(f^2-h^2)+(b^2-d^2)^2-2\,(b^2+d^2)\,a\,f\\
 &~+4\,b\,d\,\big[a\,h\,\cos(\theta_b-\theta_d+\theta_h)
 +c\,f\,\cos(\theta_b-\theta_c+\theta_d)\big]\\
 &~-2\,c\,h\,\big[b^2\,\cos(2\theta_b-\theta_c+\theta_h)
 +d^2\,\cos(2\theta_d-\theta_c-\theta_h)\big]\,.
\end{split}\label{eq:4*4secular-lhs}\ee

\subsection{Quartet solution in $D=2$ model}
\label{app:4*4-quartet}

The $D=2$ model of the previous subsection
provides a simplest example of quartet solutions.
In order that solutions for $\omega$ would form a quartet,
it is necessary and sufficient
for the secular equation with Eq.~(\ref{eq:4*4secular-lhs})
to give a pair of complex conjugates for $\omega^2$, not real $\omega^2$.
Hence the following condition for quartet solutions is obtained
from Eq.~(\ref{eq:4*4secular-lhs}),
\be\begin{split}
 &\frac{1}{4}(a^2-c^2-f^2+h^2)^2 + 2\,(b^2+d^2)\,a\,f
 + (a^2-c^2+f^2-h^2)\,(b^2-d^2) \\
 &\quad < 4\,b\,d\,\big[a\,h\,\cos(\theta_b-\theta_d+\theta_h)
 +c\,f\,\cos(\theta_b-\theta_c+\theta_d)\big]\\
 &\qquad -2\,c\,h\,\big[b^2\,\cos(2\theta_b-\theta_c+\theta_h)
 +d^2\,\cos(2\theta_d-\theta_c-\theta_h)\big]\,.
\end{split}\label{eq:4*4-quartet}\ee
Since this requires $\det\,\mathsf{S}>0$,
it is obvious that two of the four eigenvalues
of the stability matrix $\mathsf{S}$
are positive and the other two are negative.

An immediate example of quartet solutions is obtained by setting $b=c=h=0$,
with $a\ne f$ and $|a+f|<2d$ imposed; \textit{i.e.},
${\displaystyle A=\begin{pmatrix} a & 0 \\ 0 & f \end{pmatrix}}$,
${\displaystyle B=\begin{pmatrix} 0 & d\,e^{i\theta_d} \\ d\,e^{i\theta_d} & 0
\end{pmatrix}}$.
Then the matrix $\mathsf{N\,S}$ has eigenvalues
$[\pm(a-f)\pm i\sqrt{4d^2-(a+f)^2}\,]/2$.
In this particular case,
both the positive and negative eigenvalues of $\mathsf{S}$
have two-fold degeneracy.

\subsection{Jordan blocks for pure-imaginary eigenvalues
in $D=2$ model}
\label{app:4*4-pure-imag}

Pure-imaginary solutions may produce Jordan blocks, if degenerate.
An example is given by setting $a=c$, $f=h$, $b=0$, $\theta_c=\theta_h=0$
and $\theta_d=\pi/2$ in Eq.~(\ref{eq:4*4}); \textit{i.e.},
${\displaystyle A=\begin{pmatrix} a & 0 \\ 0 & f \end{pmatrix}}$,
${\displaystyle B=\begin{pmatrix} a & id \\ id & f \end{pmatrix}}$.
The matrix $\mathsf{N\,S}$ has eigenvalues $\pm id$
with the associating eigenvectors
${\displaystyle\vect{x}_\pm
\propto\begin{pmatrix}1\\ \mp 1\\ -1\\ \pm 1\end{pmatrix}}$,
each forming a 2-dimensional Jordan block
(if $a$, $d$ and $f$ are all different and non-zero).

\subsection{4-dimensional Jordan block for null eigenvalue
in $D=2$ model}
\label{app:4*4-null}

A 4-dimensional Jordan block comes out
for the null eigenvalue in the $D=2$ model
by setting $a=c$, $f=h$, $b=d$, $\theta_b=\theta_d=\theta_h=0$
and $\theta_c=\pi/2$ in Eq.~(\ref{eq:4*4})
(if $a$, $b$ and $f$ are all different and non-zero); \textit{i.e.},
${\displaystyle A=\begin{pmatrix} a & b \\ b & f \end{pmatrix}}$,
${\displaystyle B=\begin{pmatrix} ia & b \\ b & f \end{pmatrix}}$.
This provides an example of Jordan blocks with dimension higher than two.
There is only a single eigenvector,
${\displaystyle\vect{x}
\propto\begin{pmatrix}0\\ 1\\ 0\\ -1\end{pmatrix}}$.


%

\vfill\pagebreak

\end{document}